\documentclass[11pt]{article}
\usepackage{geometry}
\geometry{letterpaper, left=0.7in, right=0.7in, top=1in, bottom=1in}
\usepackage{hyperref}

\usepackage{amsmath}

\usepackage{fork-ip-macros}

\patchcmd{\maketitle}{\@fnsymbol}{\@arabic}{}{}  
\makeatother

\title{Simulation Theorems via Pseudo-random Properties}
\author{
  Arkadev Chattopadhyay \thanks{Tata Institute of Fundamental Research, Mumbai, \texttt{arkadev.c@tifr.res.in}}
  \and
   Michal Kouck\'y\thanks{Charles University, Prague, \texttt{koucky@iuuk.mff.cuni.cz}}
   \and
  Bruno Loff\thanks{INESC-TEC and University of Porto, Porto, \texttt{bruno.loff@gmail.com}}
  \and
   Sagnik Mukhopadhyay \thanks{Tata Institute of Fundamental Research, Mumbai, \texttt{sagnik@tifr.res.in}}} 
%
%
\date{}

\begin{document}

\maketitle

\setcounter{tocdepth}{2}

\abstract{

We generalize the deterministic simulation theorem of Raz and McKenzie \cite{RM99}, to any gadget which satisfies certain hitting property. We prove that inner-product and gap-Hamming satisfy this property, and as a corollary we obtain deterministic simulation theorem for these gadgets, where the gadget's input-size is logarithmic in the input-size of the outer function. This answers an open question posed by G\"{o}\"{o}s, Pitassi and Watson \cite{GPW15}.
Our result also implies the previous results for the Indexing gadget, with better parameters than was previously known. A preliminary version of the results obtained in this work appeared in \cite{CKL+17}.


\thispagestyle{empty}


\bigskip\bigskip\bigskip\bigskip
\begin{center}
\tableofcontents
\end{center}

 \newpage

\setcounter{page}{1}

\section{Introduction} \label{SEC:INTRO}

 A very basic problem in computational complexity is to understand the \emph{complexity} of a composed function $f \circ g$ in terms of the complexities of the two simpler functions $f$ and $g$ used for the composition. For concreteness, we consider $f:\{0,1\}^p \to \cZ$ and $g:\{0,1\}^m \to \{0,1\}$ and denote the composed function as $f \circ g^p: \ZO^{m p} \to \cZ$; then $f$ is called the \textit{outer-function} and $g$ is called the \textit{inner-function}. The special case of $\cZ$ being $\{0,1\}$ and $f$ the $\XOR$ function has been the focus of several works \cite{Yao82, Lev87, Imp95,Sha03,LS08,VW08, She12}, commonly known as XOR lemmas. Another special case is when $f$ is the trivial function that maps each point to itself. This case has also been widely studied in various parts of complexity theory under the names of `direct sum' and `direct product' problems, depending on the quality of the desired solution \cite{JRS03, BPSW05, HJMR07, JKN08, Dru12, Pan12, JPY12, JY12, BBCR13, BRWY13a, BRWY13, BBK+13, BR14, KLL+15, Jai15}. Making progress on even these special cases of the general problem in various models of computation is an outstanding open problem.

 While no such general theorems are known, there has been some progress in the setting of communication complexity. In this setting the input for $g$ is split between two parties, Alice and Bob.
 A particular instance of progress from a few years ago is the development of the pattern matrix method by Sherstov \cite{She11} and the closely related block-composition method of Shi and Zhu \cite{SZ09}, which led to a series of interesting developments \cite{Cha07,LS08,CA08,She12b, She13, RY15}, resolving several open problems along the way. In both these methods, the relevant analytic property of the outer function is approximate degree. While the pattern-matrix method entailed the use of a special inner function, the block-composition method, further developed by Chattopadhyay \cite{C09}, Lee and Zhang \cite{LZ10} and Sherstov \cite{She12b, She13}, prescribed the inner function to have small discrepancy. These methods are able to lower bound the randomized communication complexity of $f \circ g^p$ essentially by the product of the approximate degree of $f$ and the logarithm of the inverse of discrepancy of $g$. 

 The following simple protocol is suggestive: Alice and Bob try to solve $f$ using a decision tree (randomized/deterministic) algorithm. Such an algorithm queries the input bits of $f$ frugally. Whenever there is a query, Alice and Bob solve the relevant instance of $g$ by using the best protocol for $g$. This allows them to progress with the decision tree computation of $f$, yielding (informally) an upper bound of ${\mathcal M}^{cc} \big(f \circ g^p\big) = O({\mathcal M}^{dt}\big(f\big)\cdot {\mathcal M}^{cc}\big(g\big))$, where $\mathcal M$ could be the deterministic or randomized model and ${\mathcal M}^{dt}$ denotes the decision tree complexity. A natural question is if the above upper bound is essentially optimal. The case when both $f$ and $g$ are $\XOR$ clearly shows that this is not always the case. However, this may just be a pathological case. Indeed it is natural to study for what models $\cM$ and which inner functions $g$, is the above naive algorithm optimal.

In a remarkable and celebrated work, Raz and McKenzie \cite{RM99} showed that this na\"{\i}ve upper bound is always optimal for \emph{deterministic protocols}, when $g$ is the Indexing function ($\IND$), provided the \emph{gadget size is polynomially large} in $p$. This theorem was the main technical workhorse of Raz and McKenzie to famously separate the monotone NC hierarchy. The work of Raz and McKenzie was recently simplified and built upon by G\"o\"os, Pitassi and Watson \cite{GPW15} to solve a longstanding open problem in communication complexity. In line with \cite{GPW15}, we call such theorems \emph{simulation theorems}, because they explicitly construct a decision-tree for $f$ by simulating a given protocol for $f\circ g^p$. More recently, de Rezende, Nordstr\"om and Vinyals \cite{dNV16} port the above deterministic simulation theorem to the model of real communication, yielding new trade-offs for the measures of size and space in the cutting planes proof system.
 
 \bsni
  In this work, our main result is the following:
 \begin{theorem}
    \label{thm:det-simulation}
  Let $p \le 2^{\frac{n}{200}}$, $f:\{0,1\}^p \to \cZ$, where $\cZ$ is any domain, and $g: \zon \times \zon \rightarrow \{0,1\}$ be inner-product function, or any function from the gap-Hamming class of promise-problems. Then, 
 $$\cD^{cc}\big(f \circ g^p\big) = \Theta\bigg(\cD^{dt}\big(f\big)\cdot n\bigg).$$ 
 \end{theorem}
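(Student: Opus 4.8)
The upper bound $\cD^{cc}(f \circ g^p) = O(\cD^{dt}(f) \cdot n)$ is the easy direction: Alice and Bob run the optimal decision tree for $f$, and whenever the tree queries coordinate $i$, they spend $O(n)$ bits to exchange their inputs for the $i$-th copy of $g$ (each copy has inputs of size $n$, so $g$ has communication complexity at most $n+1$). So the real content is the lower bound $\cD^{cc}(f \circ g^p) = \Omega(\cD^{dt}(f) \cdot n)$. I would prove this via a simulation theorem: given a deterministic protocol $\Pi$ of cost $C$ for $f \circ g^p$, construct a decision tree for $f$ of depth $O(C/n)$. The classical approach of Raz–McKenzie and Göös–Pitassi–Watson maintains, as the decision tree descends, a large "rectangle-like" structure of surviving inputs to $f \circ g^p$ that is consistent with the partial transcript, and from which one can read off a partial assignment to $f$'s input; the key invariant is that the set of inputs surviving in each block is either "unrestricted" (the block's coordinate of $f$ is unqueried) or pinned down to a specific value of $g$ (the block is queried). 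The block structure must stay sufficiently pseudo-random that the protocol cannot have learned too much.

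**Where the hitting property enters.** The main new ingredient — replacing the large indexing gadget of RM/GPW — is the hitting property that (by the results cited earlier in the paper) inner-product and gap-Hamming satisfy for $n = \Omega(\log p)$. Concretely, for such a gadget $g$, every combinatorial rectangle $A \times B \subseteq \zon \times \zon$ that is not too small (say $|A|,|B| \geq 2^{n - \delta n}$ or with min-entropy deficiency below some threshold) must have $g(A\times B)$ hit both $0$ and $1$ — in fact $g^{-1}(z) \cap (A \times B) \neq \emptyset$ for both $z$, and more: the restriction of $g$ to $A \times B$ should still look balanced/unbiased. The simulation maintains, for each block $i$, sets $X_i \subseteq \zon$ (Alice's live inputs) and $Y_i \subseteq \zon$ (Bob's live inputs) with a min-entropy guarantee. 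Free blocks keep full min-entropy; queried blocks are fixed to one value. Whenever the protocol sends a bit, some block's min-entropy drops; when a block's deficiency crosses the hitting threshold, we must "query" the corresponding coordinate of $f$, pay for a decision-tree query, receive the true value $f_i$, and then use the hitting property to restore a large sub-rectangle on which $g$ evaluates to the correct bit consistent with $f$'s input — this is exactly what lets the decision tree stay consistent. Since the protocol has $C$ bits and each bit costs $O(1/n)$ of a deficiency unit per block, after all is said and done only $O(C/n)$ blocks ever get queried, giving decision-tree depth $O(C/n)$.

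**The main obstacle.** The delicate point, and where I expect most of the work to go, is the bookkeeping of the min-entropy / density invariant across blocks and proving that the hitting property is robust enough to be applied repeatedly. Each time the protocol partitions the current rectangle according to the next bit sent, one of the two halves is still large in the relevant block-wise sense (a standard Prover-argument / "thickness" lemma), but one must ensure (i) that the accumulated deficiency is correctly amortized so that at most $O(C/n)$ blocks are ever "spoiled," and (ii) that when we do query a coordinate and appeal to the hitting property, we can find a large sub-rectangle of the current live sets on which $g$ takes the prescribed value — and crucially that this restriction does not destroy the min-entropy of the \emph{other} blocks (it shouldn't, since blocks are independent in the product structure, but one must argue the protocol's transcript-consistency is a block-wise product condition, which requires maintaining that the live set is always a product of per-block sets). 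The hitting property must be quantitatively strong — e.g. "every rectangle of deficiency at most $\delta n$ has $g$ achieving both values" with $\delta$ a constant — and the bound $p \leq 2^{n/200}$ is precisely what gives enough "room" ($n$ large compared to $\log p$) for the total deficiency budget $O(C)$, spread over $p$ blocks, to keep all but $O(C/n)$ blocks below threshold. Assembling these pieces — the thickness/density lemma, the amortized deficiency potential argument, and the hitting-property-based restoration step — into a clean induction is the crux of the proof.
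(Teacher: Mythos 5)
Your high-level architecture matches the paper: reduce to a lower-bound simulation theorem, maintain a thickness/min-degree invariant on a live rectangle $A \times B \subseteq (\{0,1\}^n)^p \times (\{0,1\}^n)^p$, amortize the density drops caused by communicated bits against the density gains from queries, and exploit $p \le 2^{n/200}$ as headroom. However, the pseudo-random gadget property you invoke is not the one the argument actually requires, and this is the central technical point, not a detail. You describe the property as: every sufficiently large rectangle in $\{0,1\}^n \times \{0,1\}^n$ contains points of both $g$-values (and looks balanced). That statement is only a \emph{consequence} of what is needed, and it is too weak to drive the projection step. When the average degree of $A$ in coordinate $i$ drops below $\varphi \cdot 2^n$ and you query $z_i$, you must pick a \emph{single} $z_i$-monochromatic rectangle $U \times V$ such that restricting to $a_i \in U$, $b_i \in V$ and then projecting away coordinate $i$ retains a $\ge 1-O(\delta)$ fraction of $A_{-i}$ and, simultaneously, a $\ge 1-O(\delta)$ fraction of $B_{-i}$; that is what makes the density jump by roughly $1/\varphi$ and keeps the remaining coordinates thick. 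Merely knowing that each slice $\mathrm{Ext}_A(a') \times \mathrm{Ext}_B(b')$ contains a $z_i$-point gives you no single monochromatic rectangle that works across most rows $a'$ and most columns $b'$ at once. What the paper isolates --- a $(\delta,h)$-hitting monochromatic rectangle-distribution --- is, for each $c\in\{0,1\}$, a \emph{distribution} $\sigma_c$ over $c$-monochromatic rectangles such that any rectangle of side-density $\ge 2^{-h}$ is hit by a random $R\sim\sigma_c$ with probability $\ge 1-\delta$. By a first-moment argument, such a random $R$ then hits $\mathrm{Ext}_A(a')$ for at least a $1-3\delta$ fraction of $a'\in A_{-i}$ with probability more than $1/2$, likewise for $B$, and a union bound over the two events produces the needed common rectangle. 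Your proposal is silent on where this coordinating object would come from, and the structural "both colors appear in every large rectangle" property does not supply it.

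A smaller but genuine slip: you write that the argument requires "maintaining that the live set is always a product of per-block sets." It is not, and cannot be: the very first message already partitions $A$ into two subsets, neither of which is a product over blocks, and intersecting with $R_v$ generally destroys any product structure. The correct invariant is $\tau$-thickness --- a min-degree condition on the auxiliary bipartite graphs $G(A,i)$ --- which is strictly weaker than a product structure, and this is precisely why the average-thickness-to-thickness cleaning lemma (discard a small fraction of $A$ to restore min-degree) must be re-applied after every communication round rather than simply treating blocks independently.
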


 The inner-product function $\IP_n \zon \times \zon \rightarrow \ZO$ is defined as $\IP_n(x,y) = \sum_{i \in [n]} x_i \cdot y_i$, where the summation is taken over field $\mathbb{F}_2$. Problems in the class of gap-Hamming promise-problems, parameterized with $\gamma$ and denoted by$\GH_{n,\gamma}: \zon \times \zon \rightarrow \ZO$, distinguish the case of $(x,y)$ having Hamming distance at least $(\frac{1}{2}+\gamma) n$ from the case of $(x,y)$ having Hamming distance at most $(\frac{1}{2}-\gamma)n$, for $0 \le \gamma \le 1/4$. Note that this is the first deterministic simulation theorem with logarithmic gadget size, whereas the Raz-McKenzie simulation theorem requires a polynomial size gadget. This answers a problem raised by both G\"o\"os-Pittasi-Watson \cite{GPW15} and G\"o\"os et.al. \cite{GLM+15} of proving a Raz-McKenzie style deterministic simulation theorem for a different inner function than Indexing with a better gadget size. Moreover, it is not hard to verify that an instance  of the function $g$ easily embeds in Indexing by exponentially blowing up the size. This enables us to also re-derive the original Raz-McKenzie simulation theorem for the Indexing function, even attaining significantly better parameters. This improvement in parameters answers a question posed to us recently by Jakob Nordstr\"om \cite{Nordstrom16}.

 The techniques required to prove the deterministic simulation theorem are based on those which appear in \cite{RM99,GPW15}. Our contribution in this part is two-fold. On one hand, we generalize the proof considerably, by singling-out a new pseudo-random property of a function $g:\ZO^n\to\ZO$, which we call ``having $(\delta,h)$-hitting rectangle-distributions'', and then showing that a simulation theorem will hold ($\cD^{cc}(f_p \circ g^p) = \Theta(\cD^{dt}(f)\cdot h)$) for any $g$ with this property. We then show that the inner-product function and the gap-Hamming problem have the above property. This results in a simulation theorem for $\IP$ and $\GH$ with exponentially smaller gadget size than was previously known. We discuss the pseudo-random property and its connection to gadget-size in the next sub-section.
 
 It is well known that inner-product has strong pseudo-random properties. In particular it has vanishing discrepancy under the uniform distribution which makes it a good 2-source extractor. In fact, such strong properties of inner-product were recently used to prove simulation theorems for more exotic models of communication by G\"o\"os et al. \cite{GLM+15} and also by the authors and Dvo\v{r}\'{a}k \cite{CDK+17}  to resolve a problem with a direct-sum flavor. By comparison, the pseudo-random property we abstract for proving our simulation theorem seems milder. This intuition is corroborated by the fact that we can show that gap-Hamming problems also possess our property, even though we know that these problems have large $\Omega(1)$ discrepancy under all distributions. Interestingly, any technique that relies on the inner-function having small discrepancy, such as the block-composition method, will not succeed in proving simulation theorems for such inner gadgets.
 
 \medskip
We would, at this point, like to point out to the readers that a preliminary version of the results obtained in this paper appeared in \cite{CKL+17}.
 
 \bigskip
 We remark here that Wu, Yao and Yuen \cite{WYY17} have independently reported a proof of the simulation theorem for the inner-product function, while a draft of this manuscript was already in circulation. Implicit in their proof is the construction of hitting rectangle-distributions for $\IP$, and their construction of these distributions is similar to our own. This suggests that our pseudo-random property is essential to how simulation theorems are currently proven.

 \subsection{Our techniques} \label{SEC:TECHNIQUES}


The main tool for proving a tight deterministic simulation theorem is to use the general framework of the Raz-McKenzie theorem as expounded by G\"o\"os-Pittasi-Watson \cite{GPW15}. Given an input $z\in \ZO^p$ for $f$, and wishing to compute $f(z)$, we will query the bits of $z$ while simulating (in our head) the communication protocol for $f \circ g^p$, on inputs that are consistent with the queries to $z$ we have made thus far.
Namely, we maintain a rectangle $A \times B \subseteq \ZO^{n p} \times \ZO^{n p}$ so that for any $(x,y)\in A\times B$, $g^p(x,y)$ is \emph{consistent} with $z$ on all the coordinates that were queried. 
We will progress through the protocol with our rectangle $A\times B$ from the root to a leaf. As the protocol progresses, $A\times B$ shrinks
according to the protocol, and our goal is to maintain the consistency requirement.
For that we need that inputs in $A\times B$ allow for all possible answers of $g$ on those coordinates which we did not yet query.
Hence $A\times B$ needs to be rich enough, and we are choosing a path through the protocol that affects this richness the least.
If the protocol forces us to shrink the rectangle $A\times B$ so that we may not be able to maintain the richness condition,
we query another coordinate of $z$ to restore the richness. 
Once we reach a leaf of the protocol we learn a correct answer for $f(z)$, because there is an input $(x,y) \in A\times B$ on which $g^p(x,y)=z$ (since we preserved consistency) and all inputs in $A\times B$ give the same answer for $f \circ g^p$,

The technical property of $A\times B$ that we will maintain and which guarantees the necessary richness is called {\em thickness}. 
$A\times B$ is thick on the $i$-th coordinate if for each input pair $(x,y) \in A \times B$, even after one gets to see all the coordinates of $x$ and $y$ except for $x_i$ and $y_i$, the \emph{uncertainty} of what appears in the $i$th coordinate remains large enough so that $g(x_i, y_i)$ can be arbitrary. 
Let us denote by $\Ext^i_A(x_1,\dots, x_{i-1},x_{i+1},\dots, x_p)$ the set of possible extensions $x_i$ such that $\langle x_1,\dots,x_p\rangle \in A$. We define
$\Ext^i_B(y_1,\dots, y_{i-1},y_{i+1},\dots, y_p)$ similarly. If for a given $x_1,\dots, x_{i-1},x_{i+1},\dots, x_p$ and $y_1,\dots,\allowbreak y_{i-1},\allowbreak y_{i+1},\allowbreak \dots,y_p$
we know that both $\Ext^i_A(x_1,\dots, x_{i-1},x_{i+1},\dots, x_p)$ and $\Ext^i_B(y_1,\dots, y_{i-1},y_{i+1},\dots, y_p)$ are of size at least $2^{(\frac{1}{2}+\epsilon)n}$
then for $g=\IP_n$ there are extensions $x_i \in  \Ext^i_A(x_1,\dots, x_{i-1},x_{i+1},\dots, x_p)$ and $y_i \in \Ext^i_B(y_1,\dots, y_{i-1},y_{i+1},\dots, y_p)$ such that $\IP_n(x_i,y_i) = z_i$. Hence, we say that $A\times B$ is $\tau$-thick if $\Ext^i_A(x_1,\dots, x_{i-1},x_{i+1},\dots, x_p)$ and $\Ext^i_B(y_1,\dots, y_{i-1},y_{i+1},\dots, y_p)$ are of size at least $\tau \cdot 2^n$, for every choice of $i$ and $x_1,\dots,x_p \in A$, $y_1,\dots,y_p \in B$.

So if we can maintain the thickness of $A\times B$, we maintain the necessary richness of $A\times B$. It turns out that this is indeed possible using the
technique of Raz-McKenzie and G\"o\"os-Pittasi-Watson. Hence as we progress through the protocol we maintain $A\times B$ to be $\tau$-thick and dense.
Once the density of either $A$ or $B$ drops below certain level we are forced to make a query to another coordinate of $z$. Magically, that restores
the density (and thus thickness) of $A\times B$ on coordinates not queried. (An intuitive reason is that if the density of extensions in some coordinate is low
then the density in the remaining coordinates must be large.)

We capture the property of the inner function $g$ that allows this type of argument to work, as follows.
For $\delta \in (0,1)$ and integer $h\ge 1$ we say that $g$ has {\em $(\delta,h)$-hitting monochromatic rectangle distributions} if
there are two distributions $\sigma_0$ and $\sigma_1$ where for each $c \in \bool$,  $\sigma_c$ is a distribution over $c$-monochromatic rectangles $U \times V \subset \bool^{n} \times \bool^{n}$ (i.e., $g(u, v) = c$ on every pair $(u, v) \in U \times V$), such that for any set $X \times Y \subset \bool^{n} \times \bool^{n}$ of sufficient size, a rectangle randomly chosen according to $\sigma_c$ will intersect $X \times Y$ with large probability. More precisely, for any $c \in \bool$ and for any $X \times Y$ with $|X|/2^{n}, |Y|/2^{n} \geq 2^{-h}$,
\begin{align*}
\Pr_{(U \times V) \sim \sigma_c}[(U \times V) \cap (X \times Y) \neq \varnothing] \geq 1 -  \delta.
\end{align*}
If such distributions $\sigma_0$ and $\sigma_1$ exist, we say that $g$ has $(\delta, h)$-hitting monochromatic rectangle-distributions. We then prove the following:

\begin{theorem}\label{thm:det-gen-simulation} 
If $g$ has $(\delta, h)$-hitting monochromatic rectangle-distributions, $\delta < 1/6$, and $p \le 2^{\frac{h}{2}}$, then 
\[ 
\cD^{dt}(f) \le \frac{8}{h} \cdot \cD^{cc}(f \circ g^{\,p}). 
\] 
\end{theorem}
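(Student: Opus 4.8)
The plan is to run the Raz–McKenzie / GPW simulation argument, but replace every place where the special structure of Indexing (or inner-product) is used by a single invocation of the $(\delta,h)$-hitting property. Given a deterministic protocol $\Pi$ for $f\circ g^p$ of cost $C = \cD^{cc}(f\circ g^p)$, I want to build a deterministic decision tree for $f$ of depth at most $8C/h$. The tree, on input $z\in\ZO^p$, will walk down $\Pi$ while maintaining a ``live'' combinatorial rectangle $A\times B\subseteq(\ZO^n)^p\times(\ZO^n)^p$ together with a set $Q\subseteq[p]$ of already-queried coordinates, subject to two invariants: (i) \emph{consistency} — for every queried $i\in Q$ and every $(x,y)\in A\times B$, $g(x_i,y_i)=z_i$; and (ii) \emph{thickness/density} — on every coordinate $i\notin Q$, the marginal densities are large, concretely $|A|/2^{np}\ge 2^{-|Q|\cdot h}$ and likewise for $B$ (the precise bookkeeping will be that each query ``pays for'' at most $h$ bits of density deficit, possibly renormalized so the hitting hypothesis applies coordinatewise). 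When $\Pi$ terminates at a leaf $\ell$ with rectangle $A\times B$, the decision tree outputs the unique value $f\circ g^p$ takes on $A\times B$; consistency guarantees some $(x,y)\in A\times B$ has $g^p(x,y)=z$, so this is $f(z)$.

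\textbf{The three update steps.} At a node where Alice speaks, one of the two halves $A_0,A_1$ of $A$ has density at least $\tfrac12|A|$; we move to that half (symmetrically for Bob). Each such bit costs at most one bit of density, so a block of $h$ protocol bits costs at most $h$ bits of density on the current rectangle. \textbf{Query step:} whenever some coordinate $i\notin Q$ has become ``thin'' — meaning the density of the $i$-th-fibre-extensions of $A$ (the sets $\Ext^i_A$) has dropped below the threshold $2^{-h}$ on a noticeable fraction of the rectangle — we query $z_i$. We then need to restore both consistency on $i$ and thickness on all of $[p]\setminus(Q\cup\{i\})$; this is exactly where the hitting property enters. \textbf{The hitting application:} having learned $z_i=c$, consider the distributions $\sigma_c$ over $c$-monochromatic rectangles $U\times V\subseteq\ZO^n\times\ZO^n$ promised by the hypothesis. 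Project $A$ and $B$ onto coordinate $i$ (over the part of the rectangle still in play); as long as these projections still have density $\ge 2^{-h}$, a rectangle $U\times V\sim\sigma_c$ hits them with probability $\ge 1-\delta$. By a union bound over the two sides (and, with a bit more care, over the ``fibering'' needed to keep the rest of the rectangle dense), with probability $>1-2\delta>2/3$ there is a fixed choice of $U\times V$ such that restricting $A$ to inputs with $x_i\in U$ and $B$ to inputs with $y_i\in V$ keeps both sides dense (losing only $O(h)$ bits of density) \emph{and} fixes $g(x_i,y_i)=c$ for every surviving pair. This is the ``magic'' density-restoration move, now derived from hitting alone rather than from any algebraic feature of $g$.

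\textbf{Accounting.} Each protocol bit costs $\le 1$ bit of density; each query restores density, and the thinning that triggered a query means $\ge h$ bits of density were spent (on that coordinate) before the query fired, so the number of queries is at most (total density budget)$/h$. The density budget is governed by $C$ together with the $p\le 2^{h/2}$ slack (which ensures the per-coordinate thresholds, summed over the $\le p$ unqueried coordinates, still leave room), yielding depth $\le 8C/h$ after the constants are tracked. Finally, although the hitting argument is randomized, the decision tree we output is deterministic: at each query we fix the \emph{first} (in some canonical order) rectangle $U\times V$ in the support of $\sigma_c$ that works; the probabilistic statement only certifies that such a choice exists.

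\textbf{The main obstacle} I anticipate is the query step's simultaneity requirement: one invocation of $\sigma_c$ must restore density on \emph{all} of the remaining $\approx p$ unqueried coordinates at once, not just on coordinate $i$ in isolation. The clean way around this is to apply the hitting property not to $A,B$ themselves but to their coordinate-$i$ projections \emph{conditioned} on a typical, dense fibre in the other coordinates — i.e. to first fix a good setting of the ``bulk'' that keeps every other coordinate dense (an averaging step, using $p\le 2^{h/2}$ so the union bound over coordinates loses only a sub-constant fraction of density), and only then use $\sigma_c$ to hit the projected, still-dense sets on coordinate $i$. Getting the constants in this interleaving to close at $8/h$ — in particular making ``$\delta<1/6$'' suffice for the combined union bound over Alice's side, Bob's side, and the bulk-fixing — is the delicate part; everything else is the standard Raz–McKenzie simulation bookkeeping.
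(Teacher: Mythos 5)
Your high-level plan follows the same Raz--McKenzie/GPW skeleton as the paper: walk the protocol tree maintaining a rectangle $A\times B$ and a set of unqueried coordinates, keep a thickness invariant, and invoke the $(\delta,h)$-hitting property exactly at a query step to pin down $g(x_i,y_i)=z_i$ while preserving density. That identification is correct. However, the places where you confess uncertainty are precisely where the proof actually lives, and the mechanism you propose to close the gap is not the one that works.

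The ``main obstacle'' you flag --- that one invocation of $\sigma_c$ must simultaneously keep all $\approx p$ remaining coordinates thick --- does not exist, and your proposed fix (an averaging step plus a union bound over coordinates, burning $p\le 2^{h/2}$ there) is not what's needed. The key fact you're missing is purely combinatorial and uses nothing about $g$ or $\sigma_c$: if $A\subseteq\cA^p$ is $\tau$-thick (min-degree $\ge\tau|\cA|$ in every aux graph $G(A,i)$), then for \emph{any} subset $S\subseteq\cA$ the restricted-and-projected set $A^{i,S}_{\neq i}$ is again $\tau$-thick on all remaining coordinates. So once you restrict $x_i\in U$ and drop coordinate $i$, thickness elsewhere survives automatically; the hitting property is only needed to ensure that enough of the rectangle survives (density accounting), not to re-establish thickness on the other coordinates. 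The hypothesis $p\le 2^{h/2}$ enters somewhere else entirely: after a protocol bit halves $A_I$, average-thickness drops by a factor 2, and one must prune $A_I$ (the average-thickness-to-min-thickness lemma) to restore min-thickness $\ge\tau=2^{-h}$; the resulting min-thickness is $\varphi/(4|I|)$, so one needs $\varphi/(4p)\ge 2^{-h}$, i.e.\ $p$ not too large. Your sketch omits this pruning step altogether, so your claim ``each protocol bit costs $\le 1$ bit of density'' neither accounts for the pruning loss (it is actually a factor of $4$ in $|A_I\times B_I|$) nor maintains the thickness invariant the next query step relies on.

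Finally, your accounting is phrased in terms of density of $A$ in $(\ZO^n)^p$, losing $O(h)$ bits per query, which does not close; the correct bookkeeping tracks the density of the \emph{projected} set $A_I$ in $(\ZO^n)^{|I|}$, which \emph{increases} by a factor $\ge(1-3\delta)/\varphi\ge 2^{\eps h-3}$ at every query (this is the ``magic'' density restoration). The $\delta<1/6$ hypothesis is used exactly to make $1-3\delta\ge 1/2$ in that factor; your $1-2\delta>2/3$ is not the bound that appears. And the existence of a single $c$-monochromatic $U\times V$ that works for both sides is not a naive union bound over Alice and Bob: one sets up a 0/1 matrix with rows indexed by $a\in A_{\neq i}$ and columns by rectangles in $\mathrm{supp}(\sigma_c)$, uses min-thickness $\tau\ge 2^{-h}$ on each fibre $\Ext^{\{i\}}_A(a)$ to get row-wise hitting probability $\ge 1-\delta$, and then a Markov argument shows strictly more than half the $\sigma_c$-mass lies on columns that are ``good'' for Alice (preserve a $1-3\delta$ fraction of rows); likewise for Bob, so the two sets of good columns intersect. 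Without these three pieces --- thickness preservation under restriction, the pruning step after each protocol bit, and the projected-density accounting --- the sketch does not assemble into a proof of the stated bound.
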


We prove this general theorem and then establish that $\GH$ and $\IP$ over $n$-bits has $(o(1), \Omega(n))$-hitting rectangle-distributions. This immediately yields Theorem~\ref{thm:det-simulation}.

The distribution $\sigma_0$ for $\GH_{n,\frac{1}{4}}$ is sampled as follows: we first sample a random string $x$ of Hamming weight $\frac{n}{2}$, and we look at the set of all strings of Hamming weight $\frac{n}{2}$ which are at Hamming distance at most $\frac{n}{8}$ from $x$. Let's call this set $U_x$. The output of $\sigma_0$ will be the rectangle $U_x \times U_x$. The output of $\sigma_1$ is $U_x \times U_{\bar{x}}$, where $\bar{x}$ is the bit-wise complement of $x$. For any such $x$, $U_x \times U_x$ will be a 0-monochromatic rectangle and $U_x \times U_{\bar{x}}$ will be a 1-monochromatic rectangle. Note that if $U_x$ does not hit a subset $A$ of $\zon$, then it means that $x$ is at least $\frac{n}{8}$ Hamming distance away from the set $A$. By an application of Harper's theorem, we can show that for a sufficiently large set $A$, the number of strings which are at least $\frac{n}{8}$ Hamming distance away from $A$ is exponentially small. This will imply that both $\sigma_0$ and $\sigma_1$ will hit a sufficiently large rectangle with probability exponentially close to $1$, which is our required hitting property.

The $\sigma_0$ distribution for $\IP_n$ is picked as follows: To produce a rectangle $U\times V$ we sample uniformly at random a linear sub-space 
$V \subseteq F_2^n$ of dimension $n/2$ and we set $U=V^{\perp}$ to be the orthogonal complement of $V$. Since a random vector space of size $2^{n/2}$ hits
a fixed subset of $\ZO^n$ of size $2^{(\frac{1}{2} + \epsilon) n}$ with probability  $1-O(2^{-\epsilon n})$, and both $U$ and $V$ are random vector spaces
of that size, $U\times V$  intersects a given rectangle $X\times Y$ with probability  $1-O(2^{-\epsilon n})$. Hence, we obtain $(O(2^{-\epsilon n}), (\frac{1}{2} + \epsilon) n)$-hitting distribution for $\IP$. For the $1$-monochromatic case, we first pick a random $a \in F_2^n$ of odd hamming weight
and them pick random $V$ and $U=V^{\perp}$ inside of the orthogonal complement of $a$. The distribution $\sigma_1$ outputs the $1$-monochromatic rectangle $(a+V) \times (a+U)$, and will have the required hitting property.

\subsection{Organization}
Section \ref{SEC:DEF-PRELIM} consists of basic definitions and preliminaries. In Section \ref{SEC:DET} we prove a deterministic simulation theorem for any gadget admitting $(\delta, h)$-hitting monochromatic rectangle-distribution: sub-section \ref{sec:thick} provides some supporting lemmas for the proof, and sub-section \ref{SEC:G-DET} holds the proof itself. In Section \ref{SEC:RGH} we show that $\GH_{n, \frac{1}{4}}$ on $n$-bits has $(o(1), \frac{n}{100})$-hitting rectangle distribution, and in Section \ref{SEC:IP-DET} we show that $\IP$ on $n$-bits has $(o(1), n/5)$-hitting rectangle distribution.

\section{Basic definitions and preliminaries} \label{SEC:DEF-PRELIM}

A \emph{combinatorial rectangle}, or just a \emph{rectangle} for short, is any product $A\times B$, where both $A$ and $B$ are finite sets. If $A' \subseteq A$ and $B' \subseteq B$, then $A'\times B'$ is called a \emph{sub-rectangle} of $A\times B$. The \emph{density} of $A'$ in $A$ is $\alpha=|A'|/|A|$.
\medskip

Consider a product set $\cA = \cA_1 \times \ldots \times \cA_p$, for some natural number $p \ge 1$, where each $\cA_i$ is a subset of $\ZO^n$. Let $A \subseteq \cA$ and $I \subseteq [p] \eqdef \{1, \ldots, p\}$. Let $I=\{i_1<i_2 < \cdot <i_k\}$, and $J=[p]\setminus I$. For any $a \in (\ZO^n)^p$, we let $a_I = \langle a_{i_1}, a_{i_2}, \dots, a_{i_k}\rangle$ be the projection of $a$ onto the coordinates in $I$. Correspondingly, $A_I = \{ a_I \mid a \in A \}$ is the projection of the entire set $A$ onto $I$. For any $a' \in (\ZO^n)^{k}$ and $a'' \in (\ZO^n)^{p - k}$, we denote by $a' \times_I a''$ the $p$-tuple $a$ such that $a_I = a'$ and $a_{J} = a''$. If $I = [k]$ for some $k \le p$, we may omit the set $I$ and write only $a' \times a''$. For $i\in [p]$ and a $p$-tuple $a$, $a_{\neqi}$ denotes $a_{[p]\setminus \{i\}}$, and similarly, $A_{\neqi}$ denotes $A_{[p]\setminus \{i\}}$. For $a' \in (\ZO^n)^{k}$, we define the set of extensions $\Ext_A^{J}(a') = \{a'' \in (\ZO^n)^{p - k} \mid a' \times_I a'' \in A\}$; we call those $a''$ \emph{extensions} of $a'$. Again, if $A$ and $I$ are clear from the context, we may omit them and write only $\Ext(a')$.

Suppose $n\ge 1$ is an integer and $\cA=\ZO^n$.
For an integer $p$, a set $A \subseteq \cA^p$ and a subset $S \subseteq \cA$, the restriction of $A$ to $S$ at coordinate $i$ is the set $A^{i,S} = \{a \in A \mid a_i \in S\}$. We write $A^{i,S}_I$ for the set $(A^{i, S})_I$ (i.e. we first restrict the $i$-th coordinate then project onto the coordinates in $I$). Clearly $A^{i, S}_{\neqi}$ is non-empty if and only if $S$ and $A_i$ intersect.

The density of a set $A \subseteq \cA^p$ will be denoted by $\alpha = \frac{|A|}{|\cA|^p}$, and $\alpha^{i,S}_I = \frac{|A^{i,S}_I|}{|\cA|^{|I|}}$.

\subsection*{Communication complexity}

See \cite{KN97} for an excellent exposition on this topic, which we cover here only very briefly. In the two-party communication model introduced by Yao \cite{Yao79}, two computationally unbounded players, Alice and Bob, are required to jointly compute a function $F: \cA \times \cB \rightarrow \cZ$ where Alice is given $a \in \cA$ and Bob is given $b \in \cB$. To compute $F$, Alice and Bob communicate messages to each other, and they are charged for the total number of bits exchanged.

Formally, a \emph{deterministic protocol} $\pi:\cA\times\cB \to \cZ$ is a binary tree where each internal node $v$ is associated with one of the players; Alice's nodes are labeled by a function $a_v: \cA \rightarrow \bool$, and Bob's nodes by $b_v : \cB \rightarrow \bool$. Each leaf node is labeled by an element of $\cZ$. For each internal node $v$, the two outgoing edges are labeled by 0 and 1 respectively. 
The \emph{execution} of $\pi$ on the input $(a,b) \in \cA\times\cB$ follows a path in this tree: starting from the root, in each internal node $v$ belonging to Alice, she communicates $a_v(a)$, which advances the execution to the corresponding child of $v$; Bob does likewise on his nodes, and once the path reaches a leaf node, this node's label is the output of the execution. We say that $\pi$ \emph{correctly computes} $F$ on $(a, b)$ if this label equals $F(a, b)$.

\medskip

To each node $v$ of a deterministic protocol $\pi$ we associate a set $R_v \subseteq \cA \times \cB$ comprising those inputs $(a,b)$ which cause $\pi$ to reach node $v$. It is easy see that this set $R_v$ is a combinatorial rectangle, i.e. $R_v = A_v \times B_v$ for some $A_v \subseteq \cA$ and $B_v \subseteq \cB$. 

\medskip

The \emph{communication complexity of $\pi$} is the height of the tree. The \emph{deterministic communication complexity of $F$}, denoted $\cD^{cc}(F)$, is defined as the smallest communication complexity of any deterministic protocol which correctly computes $F$ on every input.

\subsection*{Decision tree complexity}

In the (Boolean) decision-tree model, we wish to compute a function $f:\ZO^p\to\cZ$ when given query access to the input, and are charged for the total number of queries we make.

Formally, a \emph{deterministic decision-tree} $T:\ZO^p \to \cZ$ is a rooted binary tree where each internal node $v$ is labeled with a variable-number $i \in [p]$, each edge is labeled $0$ or $1$, and and each leaf is labeled with an element of $\cZ$. The execution of $T$ on an input $z \in \ZO^p$ traces a path in this tree: at each internal node $v$ it queries the corresponding coordinate $z_i$, and follows the edge labeled $z_i$. Whenever the algorithm reaches a leaf, it outputs the associated label and terminates. We say that $T$ \emph{correctly computes} $f$ on $z$ if this label equals $f(z)$.

The \emph{query complexity} of $T$ is the height of the tree. The \emph{deterministic query complexity} of $f$, denoted $\cD^{dt}(F)$, is defined as the smallest query complexity of any deterministic decision-tree which correctly computes $f$ on every input.

\subsection*{Functions of interest}

The \emph{Inner-product function on $n$-bits}, denoted $\IP_n$ is defined on $\bool^n \times \bool^n$ to be:
\begin{align*}
\IP_n(x,y) = \sum_{i \in [n]} x_i\cdot y_i \mod 2.
\end{align*}

\bsni
For $N=2^n$, the \emph{Indexing function on $N$-bits}, $\IND_N$, is defined on $\bool^{\log N} \times \bool^{N}$ to be:
\begin{align*}
\IND_N(x,y) = y_x \quad \text{(the $x$'th bit of $y$).}
\end{align*}

\bsni
Let $n$ be a natural number and $\gamma = \frac{k}{n} \in (0, 1/2)$. For two $n$-bit strings $x$ and $y$, let $d_H(x, y) = \sum_i x_i \oplus y_i$ be their Hamming-distance. The \emph{gap-Hamming problem}, denoted $\GH_{n, \gamma}$ is a promise-problem defined on $\ZO^n\times\ZO^n$, by the condition
\[
\GH_{n, \gamma}(x, y) = \begin{cases}
1 & \tif d_H(x,y) \ge (\frac 1 2 + \gamma) \; n,\\
0 & \tif d_H(x, y) \le (\frac 1 2 - \gamma) \; n.
\end{cases}
\]


\section{Deterministic simulation theorem}
 \label{SEC:DET}
 
A \emph{simulation theorem} shows how to construct a decision tree for a function $f$ from a communication protocol for a composition problem $f \circ g^{p}$. Such a theorem can also be called a \emph{lifting} theorem, if one wishes to emphasize that lower-bounds for the decision-tree complexity of $f$ can be \emph{lifted} to lower-bounds for the communication complexity of $f\circ g^p$.  As mentioned in Section \ref{SEC:INTRO}, the deterministic lifting theorem proved in \cite{RM99}, and subsequently simplified in \cite{GPW15}, uses $\IND_N$ as inner function $g$ with $N$ being polynomially larger than $p$. In this section we will show a deterministic simulation theorem for any function which possesses a certain pseudo-random property, which we will now define. Later we will show that the Inner-product function has this property.
 
\begin{definition}[Hitting rectangle-distributions] 
Let $0\le \delta < 1$ be a real, $h\ge 1$ be an integer, and $\cA,\cB$ be some sets.
   A distribution $\sigma$ over rectangles within $\cA\times\cB$ is called a \emph{$(\delta, h)$-hitting rectangle-distribution} if, for any rectangle $A\times B$ with $|A|/|\cA|, |B|/|\cB| \ge 2^{-h}$, 
\[ 
  \Pr_{R \sim \sigma}[R \cap (A\times B) \neq \varnothing] \ge 1 - \delta. 
\]
\end{definition}

\bsni Let $g:\cA\times\cB \to \ZO$ be a (possibly partial) function. A rectangle $A\times B$ is $c$-monochromatic with respect to $g$ if $g(a, b) = c$ for every $(a,b)\in A\times B$.

\begin{definition} 
For a real $\delta\ge 0$ and an integer $h\ge 1$, we say that a (possibly partial) function \emph{$g:\cA\times\cB \to \ZO$ has $(\delta, h)$-hitting monochromatic rectangle-distributions} if there are two $(\delta, h)$-hitting rectangle-distributions $\sigma_0$ and $\sigma_1$, where each $\sigma_c$ is a distribution over rectangles within $\cA\times\cB$ that are $c$-monochromatic with respect to $g$.
\end{definition} 
 
\bsni 
The theorem we will prove in Section \ref{SEC:G-DET} is the following: 
 
\begin{theorem}\label{thm:simulation} Let $\eps \in (0, 1)$ and $\delta \in (0,\frac 1 6)$ be real numbers, and let $h\ge 6/\eps$ and $1 \le p \le 2^{h (1 - \eps)}$ be integers. Let $f:\ZO^p \rightarrow \cZ$ be a function and $g: \cA \times \cB \rightarrow \ZO$ be a (possibly partial) function. If $g$ has $(\delta, h)$-hitting monochromatic rectangle-distributions then
\[ 
\cD^{dt}(f) \le \frac{4}{\eps \cdot h} \cdot \cD^{cc}(f \circ g^{\,p}). 
\] 
\end{theorem}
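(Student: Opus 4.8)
The plan is to follow the Raz--McKenzie / G\"o\"os--Pitassi--Watson simulation strategy, but replacing the ad-hoc combinatorial arguments specific to $\IND$ with a black-box use of the hitting property. Fix an optimal deterministic protocol $\pi$ for $f\circ g^p$ of cost $C = \cD^{cc}(f\circ g^p)$. I would build a decision tree for $f$ as follows: on input $z\in\ZO^p$, maintain a set $I\subseteq[p]$ of \emph{queried} coordinates together with the known values $z_I$, and a rectangle $A\times B\subseteq \cA^p\times\cB^p$ that is the current protocol rectangle $R_v$, intersected with the \emph{consistency constraint} that $g(a_i,b_i)=z_i$ for all $i\in I$ and every $(a,b)\in A\times B$. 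The central invariant is a \emph{thickness} (equivalently density) condition: the projection $A_J$ (for $J=[p]\setminus I$) and $B_J$ should remain dense in $\cA^J,\cB^J$ — concretely, I want each one-coordinate extension set to have density at least $2^{-h}$ on every unqueried coordinate, so that the hitting property is applicable. The reason this suffices for correctness is exactly the argument sketched in the introduction: when the simulation reaches a leaf labeled $\zeta$, thickness implies $A\times B$ is nonempty and (using the hitting distributions $\sigma_0,\sigma_1$) there is $(a,b)\in A\times B$ with $g^p(a,b)$ extending $z_I$ to any desired value on $J$ — in particular $g^p(a,b)=z$ — so $\zeta=f\circ g^p(a,b)=f(z)$, and the tree outputs $\zeta$.

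Next I would describe the two kinds of steps the decision tree performs. \textbf{Simulation step:} if $A\times B$ is thick and $v$ is an internal node of $\pi$, follow the child $v'$ of $v$ for which the new rectangle $R_{v'}\cap(\text{consistency})$ loses the \emph{least} density — at least half of it survives for one of the two children since $A\times B = (A\times B)\cap R_{v_0} \;\cup\; (A\times B)\cap R_{v_1}$. \textbf{Query step:} whenever the density of $A$ or of $B$ has dropped too far (say the total density multiplier accumulated since the last query falls below a threshold tied to $2^{-\eps h/2}$ or similar), pick a coordinate $i\in J$ on which thickness is currently violated, query $z_i$, and \emph{cleanse} the rectangle: restrict to $A^{i,S}\times B^{i,S}$ where $S$ selects the monochromatic colour demanded — more precisely, remove the ``bad'' extensions and project out coordinate $i$ into $I$. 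The key quantitative lemma (this is what I'd want from Section~\ref{sec:thick}) is that a single query restores thickness on \emph{all} remaining unqueried coordinates: intuitively, if density of extensions on coordinate $i$ is below $2^{-h}$, then, by an averaging/density-transfer argument, the density on the product of the other coordinates is correspondingly boosted, so one query pays for the density lost across many simulation steps.

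The accounting is then a potential-function argument. Let $\Phi$ track $\log(1/\alpha_A) + \log(1/\alpha_B)$ where $\alpha_A,\alpha_B$ are the densities of $A_J,B_J$. Each simulation step (one bit of communication) increases $\Phi$ by at most $1$. Each query step is triggered only after $\Phi$ has grown by roughly $\eps h/2$ worth of communication bits, and the query both advances $|I|$ by one and decreases $\Phi$ back down. Since $|I|\le p\le 2^{h(1-\eps)}$ and $\Phi$ stays bounded (the thickness invariant caps how small densities can get, using $h\ge 6/\eps$ to leave slack), the number of queries is at most $\frac{C}{\eps h/2} \cdot(\text{const})$, which after fixing the constants gives $\cD^{dt}(f)\le \frac{4}{\eps h}\cdot C$. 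I would also need to check the boundary/initialization: start with $I=\varnothing$, $A\times B=\cA^p\times\cB^p$ (fully thick), at the root of $\pi$.

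The main obstacle I expect is the \emph{thickness-restoration lemma} — proving cleanly that one query, i.e. conditioning on $g(a_i,b_i)=z_i$ and projecting out coordinate $i$ after it has become thin, simultaneously restores the density lower bound on every \emph{other} unqueried coordinate. This is where one must be careful: thinness on coordinate $i$ means few extensions \emph{on average}, but we need a statement that holds pointwise for all conditionings on the remaining coordinates, and we must verify the monochromatic restriction set $S$ can be chosen (via $\sigma_0$ or $\sigma_1$) without destroying density — this is precisely the role of the hitting property, and getting the quantifiers right (choosing $S$ depending on which colour $z_i$ the simulation needs, while $z_i$ is only revealed at query time) is the delicate point. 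A secondary subtlety is ensuring the "least-density child" rule never lets density fall below the applicability threshold $2^{-h}$ of the hitting distributions before a query gets a chance to fire; this is handled by the slack $h\ge 6/\eps$ and by triggering queries eagerly, but it needs to be threaded through the bookkeeping.
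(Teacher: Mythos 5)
Your proposal captures the overall GPW/Raz--McKenzie architecture and correctly identifies the projection/thickness-restoration lemma as the crux, but it elides a distinction that is load-bearing in the paper's proof: the difference between \emph{pointwise} $\tau$-thickness ($\dmin(A,i)\ge\tau|\cA|$ for every coordinate $i$, with $\tau=2^{-h}$) and \emph{average} $\varphi$-average-thickness ($\davg(A,i)\ge\varphi|\cA|$ with $\varphi=4\cdot 2^{-\eps h}\gg\tau$). The paper maintains $\tau$-thickness as an \emph{unconditional invariant} after every single step; the trigger for querying is violation of $\varphi$-average-thickness, which happens while pointwise $\tau$-thickness is still intact. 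This matters because the hitting property acts on the extension sets $\Ext^{\{i\}}_A(a)$ and needs each of them to have density $\ge 2^{-h}$ \emph{pointwise} in $a$, and Lemma~\ref{LEM:PROJ} (the restoration lemma you are after) is stated and proved under exactly that hypothesis. In your plan you propose to ``pick a coordinate $i\in J$ on which thickness is currently violated'' --- but once pointwise thickness on $i$ is gone, the hitting distribution $\sigma_{z_i}$ no longer hits every extension set with probability $1-\delta$, and the matrix argument showing a $\sigma_{z_i}$-random rectangle is simultaneously $A$-good and $B$-good no longer goes through. The missing ingredient is the \emph{re-thinning} step after each communication round (Lemma~\ref{LEM:THICK}): when $A_I$ is still $\varphi/2$-average-thick after splitting at a protocol node, discard at most half of it to recover a $\tfrac{\varphi}{4p}$-thick subset; with $\varphi = 4\cdot 2^{-\eps h}$ and $p\le 2^{h(1-\eps)}$ that is $2^{-h}$-thick, i.e.\ $\tau$-thick. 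This re-thinning is what keeps the pointwise invariant alive across communication bits and makes the hitting property applicable whenever a query fires.

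A secondary consequence is that your potential accounting is off by a constant factor: each communication bit costs a factor $2$ in density for the split \emph{and} another factor $2$ for the re-thinning, so $\Phi=\log(1/\alpha_A)+\log(1/\alpha_B)$ increases by up to $2$ per bit, not $1$. Each query increases density by at least $(1-3\delta)/\varphi\ge\tfrac{1}{2\varphi}= 2^{\eps h-3}$ (using $\delta<1/6$), so since density stays $\le 1$ the query count is at most $\tfrac{2C}{\eps h-3}\le\tfrac{4C}{\eps h}$ for $h\ge 6/\eps$. Your last worry, about needing to choose the monochromatic colour before knowing $z_i$, does not arise: the algorithm first detects the low-average-degree coordinate $i$, queries $z_i$, and only then invokes Lemma~\ref{LEM:PROJ} with $c=z_i$; since the hitting property is available for both colours, there is no circularity.
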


\bsni

\bsni 
In Section \ref{SEC:RGH} we will show that $\GH_{n,\frac{1}{4}}$ has $(o(1), \frac{n}{100})$-hitting monochromatic rectangle-distributions. From this we obtain a simulation theorem for $\GH_{n, \frac{1}{4}}$:
 
\begin{corollary} Let $n$ be large enough even integer, and $p \le 2^{\frac{n}{200}}$ be an integer. For any function $f:\ZO^p \rightarrow \ZO$, $\cD^{dt}(f) \le O(\frac{1}{n} \cdot \cD^{cc}(f \circ \GH^{\,p}_{n,\frac{1}{4}}))$.
\end{corollary}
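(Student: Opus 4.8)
The plan is to read the corollary straight off Theorem~\ref{thm:simulation}, the only extra ingredient being the claim — stated in Section~\ref{SEC:TECHNIQUES} and proved in Section~\ref{SEC:RGH} — that $\GH_{n,\frac14}$ has $(o(1),\frac{n}{100})$-hitting monochromatic rectangle-distributions. Once $n$ is large enough, the error parameter $\delta=o(1)$ of those distributions drops below $\frac16$, so $\GH_{n,\frac14}$ has $(\delta,h)$-hitting monochromatic rectangle-distributions with $h=n/100$ and $\delta\in(0,\tfrac16)$, which is exactly the hypothesis of Theorem~\ref{thm:simulation} on $g$.

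It then remains to pick $\eps$ and check the numerical side-conditions. Take $\eps=\tfrac12$. The requirement $h\ge 6/\eps$ becomes $n/100\ge 12$, i.e.\ $n\ge 1200$, which holds for $n$ large enough; and $1\le p\le 2^{h(1-\eps)}=2^{(n/100)/2}=2^{n/200}$ is precisely the hypothesis $p\le 2^{n/200}$. Theorem~\ref{thm:simulation} then gives
\[
\cD^{dt}(f)\;\le\;\frac{4}{\eps\cdot h}\cdot\cD^{cc}\!\left(f\circ \GH_{n,\frac14}^{\,p}\right)
\;=\;\frac{800}{n}\cdot\cD^{cc}\!\left(f\circ \GH_{n,\frac14}^{\,p}\right)
\;=\;O\!\left(\tfrac1n\cdot\cD^{cc}\!\left(f\circ \GH_{n,\frac14}^{\,p}\right)\right),
\]
which is the statement. (Any fixed $\eps\in(0,1)$ works; $\eps=\tfrac12$ is the choice that makes $2^{h(1-\eps)}$ match the bound $2^{n/200}$ appearing in the corollary, and one could equally invoke Theorem~\ref{thm:det-gen-simulation} directly since its constraint $p\le 2^{h/2}$ is the same.)

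So nothing in the corollary itself is delicate; I expect the main obstacle to be entirely contained in the deferred claim that $\GH_{n,\frac14}$ has $(o(1),\frac{n}{100})$-hitting monochromatic rectangle-distributions. Following the sketch in Section~\ref{SEC:TECHNIQUES}, the route is: work over the Hamming slice $\cA=\cB=\{x\in\ZO^n:\ \text{$x$ has weight }n/2\}$; for a uniformly random weight-$\tfrac n2$ string $x$, let $U_x$ be the set of weight-$\tfrac n2$ strings at Hamming distance at most $\tfrac n8$ from $x$, and let $\sigma_0$ output $U_x\times U_x$ and $\sigma_1$ output $U_x\times U_{\bar x}$. Monochromaticity is the triangle inequality: for $u,v\in U_x$ we get $d_H(u,v)\le \tfrac n4=(\tfrac12-\tfrac14)n$, and for $u\in U_x,\ w\in U_{\bar x}$ we get $d_H(u,w)\ge d_H(x,\bar x)-\tfrac n8-\tfrac n8=\tfrac{3n}{4}=(\tfrac12+\tfrac14)n$. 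For the hitting property, $U_x$ misses a subset $A$ of the slice exactly when $x$ lies at Hamming distance at least $\tfrac n8$ from all of $A$; Harper's isoperimetric inequality bounds the number of such $x$ by the extremal value attained when $A$ is a Hamming ball, and a ball of slice-density at least $2^{-n/100}$ has its $\tfrac n8$-neighborhood covering all but an exponentially small fraction of the slice. Getting the constants to line up — radius $\tfrac n8$ versus gap $\gamma=\tfrac14$, and density threshold $2^{-n/100}$ — and, if one insists on stating $\GH$ on $\ZO^n$ rather than on the slice, reconciling the universe used here with the one used inside the simulation, is where the real work lies, and this is carried out in Section~\ref{SEC:RGH}.
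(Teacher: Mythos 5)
Your derivation of the corollary is correct and matches the paper's intent: plug $h=n/100$ and $\eps=1/2$ into Theorem~\ref{thm:simulation}, use Lemma~\ref{lem:rect-dist-gh} for the hypothesis on $\GH_{n,1/4}$, and for large $n$ the error $\delta=2^{-n/100}<1/6$, giving $\cD^{dt}(f)\le\frac{800}{n}\cdot\cD^{cc}(f\circ\GH_{n,1/4}^{\,p})$. One small correction to your closing remarks: in Section~\ref{SEC:RGH} the sets $U_x,V_x$ are Hamming balls $B_{n/8}(x)$ in the full cube $\ZO^n$, not restricted to the weight-$n/2$ slice as the introductory sketch (and your paragraph) suggests, so no slice-vs-cube reconciliation is needed; this does not affect your derivation of the corollary, which is sound.
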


\bsni
In Section \ref{SEC:IP-DET} we will show that $\IP_n$ has $(o(1), n (\frac{1}{2} - \eps))$-hitting monochromatic rectangle-distributions, for any constant $\eps \in (0, 1/2)$. This allows us to derive:
 
\begin{corollary} Let $n$ be large enough integer, $\eps \in (0, 1/2)$ be a constant real, and $p \le 2^{(\frac{1}{2} - \eps) n}$ be an integer. For any function $f:\ZO^p \rightarrow \ZO$, $\cD^{dt}(f) \le \frac{10}{n \eps} \cdot \cD^{cc}(f \circ \IP^{\,p}_n)$.
\end{corollary}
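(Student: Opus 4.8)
The plan is to prove the two hitting monochromatic rectangle-distributions for $\IP_n$ asserted in the text — that $\IP_n$ has $(o(1), n(\tfrac12-\eps))$-hitting monochromatic rectangle-distributions for every constant $\eps\in(0,1/2)$ — and then invoke Theorem~\ref{thm:simulation} with its internal parameter tuned so that the resulting bound has the stated constant $10/(\eps n)$. Everything rests on one elementary fact about random linear subspaces over $\mathbb{F}_2$, which I would isolate as a lemma: if $V\le\mathbb{F}_2^m$ is a uniformly random subspace of dimension $d$ and $S\subseteq\mathbb{F}_2^m$ with $0\notin S$, then $\Pr[V\cap S=\varnothing]\le c\cdot 2^{m-d}/|S|$ for an absolute constant $c$, whenever the right-hand side is below $1$. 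This follows from the second-moment method: for $N=|V\cap S|$ one has $\mathbb{E}[N]=|S|\cdot\tfrac{2^d-1}{2^m-1}$ and, using $\Pr[s,s'\in V]=\tfrac{(2^d-1)(2^d-2)}{(2^m-1)(2^m-2)}$ for linearly independent pairs (and no further collinear pairs exist over $\mathbb{F}_2$), $\mathbb{E}[N^2]\le\mathbb{E}[N]+(1+o(1))\mathbb{E}[N]^2$, so Chebyshev gives $\Pr[N=0]\le\mathrm{Var}(N)/\mathbb{E}[N]^2=O(1/\mathbb{E}[N])$.

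For the $0$-monochromatic distribution $\sigma_0$: sample a uniformly random subspace $V\le\mathbb{F}_2^n$ of dimension $n/2$ and output $V^{\perp}\times V$; this is $0$-monochromatic since $\langle u,v\rangle=0$ for $u\in V^\perp$, $v\in V$. Both marginals are uniform over $n/2$-dimensional subspaces, because $W\mapsto W^\perp$ is a measure-preserving involution on them. Hence, given $X\times Y$ with $|X|,|Y|\ge 2^{n(\frac12+\eps)}$, the lemma (with $m=n$, $d=n/2$) bounds $\Pr[V^\perp\cap X=\varnothing]$ and $\Pr[V\cap Y=\varnothing]$ each by $O(2^{-\eps n})$ (the cases $0\in X$ or $0\in Y$ being trivial), and a union bound gives $\Pr[(V^\perp\times V)\cap(X\times Y)=\varnothing]=O(2^{-\eps n})=o(1)$.

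The $1$-monochromatic distribution $\sigma_1$ is where the real work is, and I expect this to be the main obstacle. Pick $a\in\mathbb{F}_2^n$ uniformly among odd-weight vectors (so $\langle a,a\rangle=1$); inside the nondegenerate space $a^\perp$ (dimension $n-1$) pick a uniformly random subspace $V$ of dimension $\approx n/2$ and let $U$ be its orthogonal complement within $a^\perp$; output $(a+V)\times(a+U)$, which is $1$-monochromatic since $\langle a+v,a+u\rangle=\langle a,a\rangle+\langle a,u\rangle+\langle v,a\rangle+\langle v,u\rangle=1$ for $v\in V$, $u\in U$. To prove the hitting property for $|X|,|Y|\ge 2^{n(\frac12+\eps)}$ I would argue in two steps. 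First, a concentration step over $a$: writing $H_a^1=\{x:\langle a,x\rangle=1\}$, the indicators $\mathbf{1}[\langle a,x\rangle=1]$, $x\in X$, are over random odd-weight $a$ pairwise independent up to $O(1)$ degenerate directions (only $0$ and the all-ones vector cause trouble), so $\mathbb{E}_a[|X\cap H_a^1|]=(\tfrac12-o(1))|X|$ and a second-moment estimate gives $|X\cap H_a^1|\ge\tfrac14|X|\ge 2^{n(\frac12+\eps)-2}$ except with probability $2^{-\Omega(n)}$, and likewise for $Y$. Second, conditioned on a ``good'' $a$: since $x\in H_a^1\iff x+a\in a^\perp$, the set $S_X:=(X\cap H_a^1)+a\subseteq a^\perp$ has $|S_X|\ge 2^{n(\frac12+\eps)-2}$ and $(a+V)\cap X\neq\varnothing\iff V\cap S_X\neq\varnothing$ (and if $a\in X$ the intersection is nonempty trivially); applying the lemma inside the ambient space $a^\perp$ of dimension $n-1$ bounds $\Pr[V\cap S_X=\varnothing]=O(2^{-\eps n})$, and since the form is nondegenerate on $a^\perp$ the map $V\mapsto U$ is again a measure-preserving bijection, so the same applies to $a+U$ and $Y$. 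A union bound over the two ``bad-$a$'' events and the two ``miss'' events yields $\Pr_{R\sim\sigma_1}[R\cap(X\times Y)=\varnothing]=2^{-\Omega(n)}=o(1)$.

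Finally, the parameter matching. Given $\eps\in(0,1/2)$, apply the above with a constant $\eta$ slightly below $\eps/2$ in place of $\eps$, so $\IP_n$ has $(o(1),h)$-hitting monochromatic rectangle-distributions for $h=\lfloor(\tfrac12-\eta)n\rfloor$; then invoke Theorem~\ref{thm:simulation} with its internal parameter $\eps'$ chosen so that $h(1-\eps')\ge(\tfrac12-\eps)n$ (accommodating $p\le 2^{(\frac12-\eps)n}$) and $\eps'h\ge\tfrac{2\eps n}{5}$ (giving the constant $10$). A short computation shows both are satisfiable for $n$ large — indeed letting $\eta\to\eps/2$ and $\eps'\to\eps/(1-\eps)$ yields even the cleaner bound $8/(\eps n)$ — and the remaining hypotheses $\delta<1/6$ and $h\ge 6/\eps'$ hold for all sufficiently large $n$. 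The conclusion of Theorem~\ref{thm:simulation} then gives
\[
\cD^{dt}(f)\le\frac{4}{\eps' h}\,\cD^{cc}(f\circ\IP^{\,p}_n)\le\frac{10}{\eps n}\,\cD^{cc}(f\circ\IP^{\,p}_n),
\]
which is the claim.
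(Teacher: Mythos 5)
Your proof is correct and follows essentially the same route as the paper: the same $\sigma_0,\sigma_1$ (random orthogonal-complement pairs for $\sigma_0$; cosets $a+V,\,a+V^{\perp}$ inside $a^{\perp}$ for odd-weight $a$ for $\sigma_1$), the same second-moment/anti-correlation lemma for random subspaces, and the same parameter-matching into Theorem~\ref{thm:simulation}, landing at $8/(\eps n)\le 10/(\eps n)$. The only local difference is in the ``good $a$'' step for $\sigma_1$: you run a second-moment argument directly over the indicators $\mathbf{1}[\langle a,x\rangle=1]$ for random odd-weight $a$, whereas the paper first applies the random-subspace concentration to $a^{\perp}$ for a uniformly random nonzero $a$ and then passes to odd-weight $a$ by the ``more than half of nonzero $a$ are odd-weight'' trick; both work, though note that besides $0$ and $\bar 1$ the pairs $\{x,x+\bar 1\}$ also fail pairwise independence --- they are, however, anti-correlated, so your application of the second-moment bound remains valid.
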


\bsni
This allow us to significantly improve the gadget size known for simulation theorem of \cite{RM99, GPW15}, that uses the Indexing function instead of Inner-Product. Indeed, Jakob Nordstr\"om \cite{Nordstrom16} recently posed to us the challenge of proving a simulation theorem for $f \circ \IND_N^p$, with a gadget size $N$ smaller than $p^3$ ($p^3$ is already a significant improvement to \cite{RM99, GPW15}).

This follows from the above corollary, because of the following reduction: Given an instance $(a, b) \subseteq (\zonp)^2$ of $f \circ \IP_n^p$ where $p \leq 2^{n(\frac{1}{2} + \eps)}$, Alice and Bob can construct an instance of $f \circ \IND_N^p$ where $N = 2^n$. Bob converts his input $b \in \bool^{np}$ to $b' \in \bool^{N p}$, so that each $b'_i = [\IP_n(x_1, b_i) \rangle, \cdots , \IP_n(x_N, b_i) \rangle]$ where $\{x_1, \cdots, x_N\} = \bool^n$ is an ordering of all $n$-bit strings. It is easy to see that  $\IP_n(a_i,b_i) = \IND_N(a_i, b'_i)$. Hence it follows as a corollary to our result for $\IP$:

\begin{corollary}
	\label{COR:GPW}
Let $\eps \in (0, 1/2)$ be a constant real number, and $N$ and $p$ be sufficiently large natural numbers, such that $p^{2 + \eps} \le N$. Then $\cD^{dt}(f) = O( \frac{1}{\eps \cdot \log N} \cdot \cD^{cc}(f \circ \IND_N^{\,p}))$. 
\end{corollary}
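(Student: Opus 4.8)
The plan is to derive this directly from the previous corollary (the $\IP_n$ simulation theorem) via the reduction already sketched in the paragraph preceding the statement, once we set the parameters correctly. Set $n = \log N$, and pick the constant $\eps' \in (0,1/2)$ for the $\IP$-corollary as a function of the given $\eps$; concretely, if $p^{2+\eps}\le N$ then taking $\eps' = \eps/(4+2\eps)$ (or any sufficiently small constant depending only on $\eps$) ensures $p \le 2^{(\frac12 - \eps')n}$, since $p \le N^{1/(2+\eps)} = 2^{n/(2+\eps)}$ and $\frac{1}{2+\eps} \le \frac12 - \eps'$ for this choice. So the hypothesis of the $\IP_n$-corollary is met with these parameters.

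Next I would verify the reduction $\IP_n(a_i, b_i) = \IND_N(a_i, b'_i)$ coordinatewise. Bob maps his block $b_i \in \ZO^n$ to $b'_i \in \ZO^N$ whose $j$-th bit is $\IP_n(x_j, b_i)$, where $x_1,\dots,x_N$ enumerates $\ZO^n$; Alice keeps $a_i \in \ZO^n$ as the index. Then $\IND_N(a_i, b'_i)$ is the bit of $b'_i$ indexed by the integer whose binary representation is $a_i$, which is exactly $\IP_n(x_{a_i}, b_i) = \IP_n(a_i, b_i)$. Doing this in every one of the $p$ blocks shows that any communication protocol for $f \circ \IND_N^{\,p}$ yields a protocol for $f \circ \IP_n^{\,p}$ of the same cost: Bob locally computes $b'$ from $b$ with no communication, and the two parties then run the $\IND_N^{\,p}$ protocol. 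Hence $\cD^{cc}(f \circ \IP_n^{\,p}) \le \cD^{cc}(f \circ \IND_N^{\,p})$.

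Combining, $\cD^{dt}(f) \le \frac{10}{n\eps'}\cdot \cD^{cc}(f \circ \IP_n^{\,p}) \le \frac{10}{n\eps'}\cdot \cD^{cc}(f \circ \IND_N^{\,p})$, and since $\eps'$ is a constant multiple of $\eps$ (bounded away from $0$ for $\eps$ bounded away from $0$) and $n = \log N$, this is $O\!\left(\frac{1}{\eps \log N}\cdot \cD^{cc}(f \circ \IND_N^{\,p})\right)$, as claimed. I expect no real obstacle here — the only thing to be careful about is bookkeeping the parameter translation between $\eps$ and $\eps'$ so that the constant in the $O(\cdot)$ genuinely depends only on $\eps$ and the "sufficiently large $N,p$" caveat absorbs the lower-order requirements (e.g. $h \ge 6/\eps'$, $n$ large) inherited from Theorem~\ref{thm:simulation}. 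One should also note the implicit assumption $\cZ = \ZO$ inherited from the corollary being cited, or observe that the reduction and simulation argument are insensitive to the size of $\cZ$ so the statement holds for general $\cZ$ as well.
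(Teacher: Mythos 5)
Your proposal is correct and follows exactly the route the paper intends: reduce $f\circ \IP_n^{\,p}$ to $f\circ\IND_N^{\,p}$ by having Bob locally expand each block $b_i$ to the truth-table $b'_i$ of $\IP_n(\cdot,b_i)$, then invoke the $\IP_n$ corollary after translating $p^{2+\eps}\le N$ into $p\le 2^{(\frac12-\eps')n}$ with $\eps'=\eps/(4+2\eps)$. The parameter bookkeeping and the observation that $1/\eps' = O(1/\eps)$ are exactly what the paper leaves implicit, so nothing is missing.
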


 Also, it is worth noting that the proof of Lemma 7 (projection lemma) in \cite{GPW15} implicitly proves that $\IND_n$ has $(o(1), \frac 3 {20} \log N)$-hitting rectangle-distribution. Hence we can also apply Theorem \ref{thm:simulation} directly to obtain a corollary similar to Corollary \ref{COR:GPW} (albeit with much larger gadget size $N$).

 

\subsection{Thickness and its properties} \label{sec:thick}

\begin{definition}[Aux graph, average and min-degrees] Let $p \ge 2$. For $i \in [p]$ and $A \subseteq \cA^p$, the aux graph $G(A, i)$ is the bipartite graph with left side vertices $A_i$, right side vertices $A_{\neqi}$ and edges corresponding to the set $A$, i.e., $(a',a'')$ is an edge iff $a' \times_{\{i\}} a'' \in A$.

\bsni
We define the average degree of $G(A, i)$ to be the average right-degree:
\begin{align*}
\davg(A, i) = \frac{|A|}{|A_{\neqi}|},
\end{align*}
and the min-degree of $G(A, i)$, to be the minimum right-degree:
\begin{align*}
\dmin(A, i) = \min_{a' \in A_{\neqi}} |\Ext(a')|.
\end{align*}
\end{definition}

\begin{definition}[Thickness and average-thickness]
	\label{def:thickness}
 For $p\ge 2$ and $\tau, \varphi \in (0,1)$, a set $A \subseteq \cA^p$ is called {\em $\tau$-thick} if $\dmin(A, i) \geq \tau \cdot |\cA|$ for all $i \in [p]$.
(Note, an empty set $A$ is $\tau$-thick.)
 Similarly, $A$ is called {\em $\varphi$-average-thick} if $\davg(A, i) \geq \varphi\cdot |\cA|$ for all $i \in [p]$. For a rectangle $A\times B \subseteq \cA^p \times \cB^p$, we say that the rectangle $A \times B$ is $\tau$-thick if both $A$ and $B$ are $\tau$-thick. For $p=1$, set $A\subseteq \cA$ is $\tau$-thick if $|A| \ge \tau \cdot |\cA|$.
\end{definition}

\bsni
The following property is from \cite[Lemma 6]{GPW15}.

\begin{lemma}[Average-thickness implies thickness]
	\label{LEM:THICK}
For any $p\ge 2$, if $A \subseteq \cA^p$ is $\varphi$-average-thick, then for every $\delta \in (0, 1)$ there is a $\frac{\delta}{p} \varphi$-thick subset $A' \subseteq A$ with $|A'| \geq (1 - \delta) |A|$.
\end{lemma}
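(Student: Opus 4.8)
The plan is a standard iterative ``peeling'' argument, in the spirit of \cite{GPW15}: I would repeatedly discard the few elements that violate the target min-degree, and then bound the total loss by a monotonicity/charging argument driven by average-thickness. If $A=\varnothing$ the claim is trivial, so assume $A$ is nonempty, and fix the degree threshold $t := \tfrac{\delta}{p}\varphi\,|\cA|$.

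Start with $A' := A$. While there exist a coordinate $i\in[p]$ and a right-vertex $a''\in A'_{\neqi}$ of the aux graph $G(A',i)$ with $1 \le |\Ext_{A'}(a'')| < t$, delete from $A'$ all (fewer than $t$) elements $a\in A'$ with $a_{\neqi}=a''$. Each step removes at least one element, so the process terminates; let $A'$ be the final set. Upon termination, every $a''\in A'_{\neqi}$ (for every $i$) satisfies $|\Ext_{A'}(a'')|\ge t$ --- a degree in $[1,t)$ would have triggered another step, and a vertex of degree $0$ is by definition not in $A'_{\neqi}$. Hence $\dmin(A',i)\ge t = \tfrac{\delta}{p}\varphi\,|\cA|$ for all $i$, i.e.\ $A'$ is $\tfrac{\delta}{p}\varphi$-thick.

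It remains to check $|A'|\ge(1-\delta)|A|$, which is the only point requiring care. The key observation is that the procedure is \emph{monotone}: elements are only ever removed from $A'$. Thus once the fiber of some right-vertex $a''$ of $G(A',i)$ has been deleted, $a''$ has degree $0$ thereafter and is never processed again at coordinate $i$; and every $a''$ processed at coordinate $i$ belongs to $A'_{\neqi}\subseteq A_{\neqi}$. Therefore the number of deletion steps at coordinate $i$ is at most $|A_{\neqi}|$, and since each step deletes fewer than $t$ elements, the total number of deleted elements is at most $t\sum_{i=1}^p |A_{\neqi}|$. Now I invoke $\varphi$-average-thickness: $|A_{\neqi}| = |A|/\davg(A,i) \le |A|/(\varphi|\cA|)$ for every $i$. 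Substituting, the total loss is at most $t\cdot p\cdot \tfrac{|A|}{\varphi|\cA|} = \tfrac{\delta}{p}\varphi|\cA|\cdot p\cdot\tfrac{|A|}{\varphi|\cA|} = \delta|A|$, so $|A'| \ge (1-\delta)|A|$.

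The main obstacle is conceptual, not computational: a naive bound of $(\text{number of iterations})\times t$ on the number of deleted elements is useless, since the peeling could a priori run for very many rounds. The fix is the monotonicity observation above, which caps the number of rounds at coordinate $i$ by $|A_{\neqi}|$; average-thickness is then precisely what controls $|A_{\neqi}|$ and closes the calculation.
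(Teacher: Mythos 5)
Your proof is correct and follows essentially the same iterative peeling strategy as the paper's own proof (which is phrased as an explicit algorithm): bound the number of deletion rounds at each coordinate $i$ by $|A_{\neqi}|$ via the monotonicity observation, then use average-thickness to bound $\sum_i |A_{\neqi}| \le p|A|/(\varphi|\cA|)$ and multiply by the per-round loss of at most $\frac{\delta}{p}\varphi|\cA|$. If anything, your write-up is slightly more careful in spelling out why a right-vertex is never reprocessed, and it correctly works with the general $|\cA|$ rather than the paper's implicit specialization to $2^n$.
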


\begin{proof}
The set $A'$ is obtained by running Algorithm \ref{ALG:THICK}.

\begin{minipage}[H]{0.8\textwidth}
  \begin{algorithm}[H]
    \caption{}\label{ALG:THICK}
    \begin{algorithmic}[1]
      \small \State Set $A^0 = A$, $j=0$.  \While{$\dmin(A^j, i) < \frac{\delta}{p} \varphi \cdot 2^n$ for some $i \in [p]$} \State Let $a'$ be a right node of $G(A^j,i)$ with non-zero degree less than $\frac{\delta}{p} \varphi \cdot 2^n$.  \State Set $A^{j+1} = A^{j} \setminus \{ a' \} \times_{i} \Ext(a')$, i.e., remove every extension of $a'$. Increment $j$.
      \EndWhile
      \State Set $A'=A^j$.
    \end{algorithmic}
  \end{algorithm}
\end{minipage}

\bsni
The total number of iteration of the algorithm is at 
most $\sum_{i \in [p]} |A_{\neqi}|$. (We remove at least one node in some $G(A^j,i)$ in each iteration which was a node also in the original $G(A,i)$.) 
So the number of iterations is at most
\begin{align*}
\sum_{i \in [p]} |A_{\neqi}| = \sum_{i \in [p]} \frac{|A|}{\davg(A, i)} \leq \frac{p |A|}{\varphi 2^n}.
\end{align*}
As the algorithm removes at most $\frac{\delta}{p} \varphi \cdot 2^n$ elements of $A$ in each iteration, the total number of elements removed from $A$ is  
at most $\delta |A|$, so $|A'| \ge (1 - \delta) |A|$. Hence, the algorithm always terminates with a non-empty set $A'$ that must be $\frac{\delta}{p} \varphi$-thick. 
\end{proof}

\medskip

\begin{lemma}
  \label{LEM:PROJ-THICKNESS} Let $p\ge 2$ be an integer, $i\in [p]$, $A \subseteq \cA^p$ be a $\tau$-thick set, and $S \subseteq \cA$. The set $A^{i,S}_{\neqi}$ is $\tau$-thick. $A^{i,S}_{\neqi}$ is empty iff $S \cap A_i$ is empty. 
\end{lemma}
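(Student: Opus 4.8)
The plan is to separate the (trivial) emptiness claim from the thickness claim, and to prove thickness by reducing, one coordinate at a time, to the thickness of $A$ itself. The emptiness statement is immediate from the definitions: $A^{i,S}_{\neqi}$ is the projection of $A^{i,S}=\{a\in A: a_i\in S\}$, and the latter is nonempty exactly when some $a\in A$ has $a_i\in S$, i.e.\ when $A_i\cap S\neq\varnothing$ — this is precisely the remark already recorded in Section~\ref{SEC:DEF-PRELIM}. If $A^{i,S}_{\neqi}$ is empty it is $\tau$-thick by convention, so from now on one assumes $A_i\cap S\neq\varnothing$ and fixes some $s\in A_i\cap S$.

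When $p=2$, say $i=1$ (the case $i=2$ is symmetric). Then $A^{1,S}_{\neqi}=\{a_2\in\cA : \exists\,a_1\in S,\ (a_1,a_2)\in A\}\subseteq\cA$, and $\tau$-thickness means this set has size at least $\tau|\cA|$. Taking the fixed $s\in A_1\cap S$ as a right node of $G(A,2)$, each of the (at least $\dmin(A,2)\ge\tau|\cA|$) elements of $\Ext_A^{\{2\}}(s)$ is a distinct element of $A^{1,S}_{\neqi}$, which gives the bound.

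When $p\ge 3$, fix a coordinate $j\in[p]\setminus\{i\}$ and a right node $b'$ of the aux graph $G\big(A^{i,S}_{\neqi},j\big)$; that is, $b'$ ranges over the projection of $A^{i,S}_{\neqi}$ onto the coordinates $[p]\setminus\{i,j\}$. Unwinding the definitions, $b'$ is the restriction to $[p]\setminus\{i,j\}$ of some $a\in A$ with $a_i\in S$; let $a^{\ast}$ denote the restriction of that $a$ to $[p]\setminus\{j\}$, which is a right node of $G(A,j)$. The key point is the inclusion $\Ext_A^{\{j\}}(a^{\ast})\subseteq\Ext_{A^{i,S}_{\neqi}}^{\{j\}}(b')$: if $c\in\cA$ is such that the tuple obtained from $a^{\ast}$ by inserting $c$ in coordinate $j$ lies in $A$, then that tuple still carries $a_i\in S$ in coordinate $i$, so its projection onto $[p]\setminus\{i\}$ lies in $A^{i,S}_{\neqi}$, and this projection is exactly $b'$ with $c$ inserted in coordinate $j$. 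Consequently $\big|\Ext_{A^{i,S}_{\neqi}}^{\{j\}}(b')\big|\ge\big|\Ext_A^{\{j\}}(a^{\ast})\big|\ge\dmin(A,j)\ge\tau|\cA|$. As $j$ and $b'$ were arbitrary, $A^{i,S}_{\neqi}$ is $\tau$-thick.

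The only step that requires care is the index bookkeeping in the inclusion $\Ext_A^{\{j\}}(a^{\ast})\subseteq\Ext_{A^{i,S}_{\neqi}}^{\{j\}}(b')$; conceptually it just says that pinning coordinate $i$ to a value inside $S$ can only shrink, never enlarge, the set of coordinate-$j$ extensions, so there is no real obstacle here.
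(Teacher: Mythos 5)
Your proof is correct and takes essentially the same approach as the paper's: the emptiness claim is read off the definition, the $p\ge 3$ case reduces to $\dmin(A,j)$ via the inclusion $\Ext_A^{\{j\}}(a^{\ast})\subseteq\Ext_{A^{i,S}_{\neq i}}^{\{j\}}(b')$, and the $p=2$ case is handled separately by bounding the degree of a single $s\in A_i\cap S$ in $G(A,\neq i)$. If anything, you are slightly more careful than the paper in making explicit that the argument must range over all right nodes $b'$ of $G(A^{i,S}_{\neq i},\cdot)$ and that each lifts to a suitable $a\in A$ with $a_i\in S$.
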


\begin{proof}
Notice that $A^{i, S}_{\neqi}$ is non-empty iff $S \cap A_i$ is non-empty.
Consider the case of $p\ge 3$.
Let $a\in A$, where $a_i \in S$. Set $a'=a_{\neqi}$. For $j' \in [p-1]$, let $j=j'+1$ if $j' \ge i$, and $j=j'$ otherwise. Clearly, $\Ext^{\{j\}}_A (a_{\neq j}) \subseteq \Ext^{\{j'\}}_{A^{i,S}_{\neqi}}(a'_{\neq j'})$, hence the degree of $a'$ in $G(A^{i,S}_{\neqi},j')$ is at least the degree of $a$
in $G(A,j)$ which is at least $\tau \cdot |\cA|$. Hence, $A^{i,S}_{\neqi}$ is $\tau$-thick.

To see the case $p = 2$, assume there is some string $a' \in A_{\neqi}$ which has some extension $a'' \in S$; but $A$ itself is $\tau$-thick, so there have to be at least $\tau \cdot |\cA|$ many such $a'$, which will then all be in $A^{i, S}_{\neqi}$.
\end{proof}



\begin{lemma}
	\label{LEM:PROJ}
Let $h\ge 1$, $p\ge 2$ and $i\in [p]$ be integers and $\delta,\tau,\varphi \in (0,1)$ be reals, where $\tau \geq 2^{-h}$.
Consider a function $g: \cA \times \cB \rightarrow \bool$ which has $(\delta, h)$-hitting monochromatic rectangle-distributions. Suppose $A \times B \subseteq \cA^p \times \cB^p$ is a non-empty rectangle which is $\tau$-thick, and suppose also that $\davg(A, i) \le \varphi \cdot |\cA|$. Then for any $c \in \bool$, there is a $c$-monochromatic rectangle $U \times V \subseteq \cA \times \cB$ such that
\begin{enumerate}
\item $A_{\neqi}^{i,U}$ and $B_{\neqi}^{i,V}$ is $\tau$-thick,
\item $\alpha_{\neqi}^{i,U} \geq \frac{1}{\varphi} (1 - 3\delta) \alpha$,
\item $\beta_{\neqi}^{i,V} \geq (1 - 3\delta) \beta$,
\end{enumerate}
where $\alpha = |A|/|\cA|^p$, $\beta = |B|/|\cB|^p$, $\alpha_{\neqi}^{i,U} = |A_{\neqi}^{i,U}|/|\cA|^{p-1}$ and $\beta = |B_{\neqi}^{i,U}|/|\cB|^{p-1}$.
\end{lemma}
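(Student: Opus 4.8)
The plan is to choose the rectangle $U\times V$ at random from the distribution $\sigma_c$ supplied by the $(\delta,h)$-hitting property and to argue that a random choice already works with positive probability. The observation that makes the hitting property usable here is that $\tau$-thickness forces every extension set to be large: for $a''\in A_{\neqi}$ we have $|\Ext_A(a'')|=\deg_{G(A,i)}(a'')\ge\dmin(A,i)\ge\tau|\cA|\ge2^{-h}|\cA|$, and symmetrically $|\Ext_B(b'')|\ge2^{-h}|\cB|$ for every $b''\in B_{\neqi}$.

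First I would control the expected loss incurred on $A$ by restricting its $i$-th coordinate to $U$. Fix arbitrary $a_0''\in A_{\neqi}$ and $b_0''\in B_{\neqi}$ (these exist since $A\times B$ is non-empty). For a fixed $a''\in A_{\neqi}$, applying the hitting property to the rectangle $\Ext_A(a'')\times\Ext_B(b_0'')$ — both of whose sides have relative size at least $2^{-h}$ — gives $\Pr_{U\times V\sim\sigma_c}[\,U\cap\Ext_A(a'')\neq\varnothing\,]\ge1-\delta$. Since $a''\in A^{i,U}_{\neqi}$ holds precisely when $U\cap\Ext_A(a'')\neq\varnothing$, summing over $a''$ and using linearity of expectation yields $\mathbb{E}\big[\,|A_{\neqi}|-|A^{i,U}_{\neqi}|\,\big]\le\delta\,|A_{\neqi}|$, and symmetrically $\mathbb{E}\big[\,|B_{\neqi}|-|B^{i,V}_{\neqi}|\,\big]\le\delta\,|B_{\neqi}|$.

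Next I would run a single Markov argument on the sum of the two normalized losses, $Z=\big(|A_{\neqi}|-|A^{i,U}_{\neqi}|\big)/|A_{\neqi}|+\big(|B_{\neqi}|-|B^{i,V}_{\neqi}|\big)/|B_{\neqi}|$. This is a non-negative random variable with $\mathbb{E}[Z]\le2\delta$, so $\Pr[Z<3\delta]\ge1/3>0$; hence there is a rectangle $U\times V$ in the support of $\sigma_c$ (so $c$-monochromatic) with $Z<3\delta$, and then both summands are below $3\delta$, giving $|A^{i,U}_{\neqi}|>(1-3\delta)|A_{\neqi}|$ and $|B^{i,V}_{\neqi}|>(1-3\delta)|B_{\neqi}|$ simultaneously. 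Now item (1) is immediate from Lemma~\ref{LEM:PROJ-THICKNESS}. For (2) I would combine the size bound with $|A_{\neqi}|=|A|/\davg(A,i)\ge|A|/(\varphi|\cA|)$, which is exactly the hypothesis $\davg(A,i)\le\varphi|\cA|$; dividing through by $|\cA|^{p-1}$ gives $\alpha^{i,U}_{\neqi}\ge\frac{1}{\varphi}(1-3\delta)\alpha$. For (3) the trivial estimate $|B_{\neqi}|\ge|B|/|\cB|$ (the average right-degree of $G(B,i)$ is at most $|\cB|$) gives $\beta^{i,V}_{\neqi}\ge(1-3\delta)\beta$ after dividing by $|\cB|^{p-1}$.

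The heart of the argument is the first step: recognizing that $\tau$-thickness with $\tau\ge2^{-h}$ is precisely the condition under which the hitting property — a statement about large \emph{rectangles} — can be applied row by row to track how much of $A_{\neqi}$ survives the restriction $A\mapsto A^{i,U}$. After that, the remaining work is routine, and the only place I would be careful is keeping the three types of quantities apart: the densities $\alpha,\beta$ (in $\cA^p,\cB^p$), the degree quantities $\davg,\dmin$, and the fibre sizes $|A_{\neqi}|,|B_{\neqi}|$; in particular the extra $\frac{1}{\varphi}$ in (2) appears only because $\davg(A,i)$ is assumed small, which makes the fibre $A_{\neqi}$ comparatively large. (If $\delta\ge1/3$ the inequalities in (2)–(3) are vacuous, so nothing is lost by assuming $1-3\delta>0$.)
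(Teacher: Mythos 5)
Your proof is correct and follows essentially the same route as the paper's: apply the hitting distribution $\sigma_c$ to the extension sets (which thickness and $\tau\ge 2^{-h}$ guarantee are large enough), conclude that the expected fraction of $A_{\neqi}$ (resp.\ $B_{\neqi}$) that fails to survive is at most $\delta$, and then deduce by an averaging argument that some single $R=U\times V$ in the support simultaneously preserves a $1-3\delta$ fraction of both sides; the size bounds and Lemma~\ref{LEM:PROJ-THICKNESS} then finish the job. The only cosmetic difference is in organizing the averaging step — you run one Markov inequality on the sum $Z$ of the two normalized losses, whereas the paper counts, for each side separately, the $\sigma_c$-mass of ``good'' columns and argues each exceeds $1/2$ — and you are a touch more explicit than the paper in pairing $\Ext_A(a'')$ with a fixed $\Ext_B(b_0'')$ so that the rectangle-hitting property actually applies to the one-sided event $U\cap\Ext_A(a'')\neq\varnothing$.
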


{The constant $3$ in the statement may be replaced by any value greater than $2$, so the lemma is still meaningful for $\delta$ arbitrarily close to $1/2$.}

\begin{proof}
Fix $c\in \ZO$. Consider a matrix $M$ where rows correspond to strings $a \in A_{\neqi}$, and columns correspond to rectangles $R = U \times V$ in the support of $\sigma_c$. Set each entry $M(a, R)$ to $1$ if $U \cap \Ext^{\{i\}}_A(a) \neq \emptyset$, and set it to $0$ otherwise.

For each $a \in A_{\neqi}$, $|\Ext^{\{i\}}_A(a)| \ge \tau |\cA|$, and because $\sigma_c$ is a $(\delta, h)$-hitting rectangle-distribution and  $\tau \geq 2^{-h}$, we know that if we pick a column $R$ according to $\sigma_c$, then $M(a, R) = 1$ with probability $\ge 1 - \delta$. So the probability that $M(a, R) = 1$ over uniform $a$ and $\sigma_c$-chosen $R$ is $\ge 1 - \delta$.

Call a column of $M$ \emph{$A$-good} if $M(a, R) = 1$ for at least $1 - 3 \delta$ fraction of the rows $a$. Now it must be the case that the $A$-good columns have strictly more than ${1}/{2}$ of the $\sigma_c$-mass. Otherwise the probability that $M(a, R) = 1$ would be $< 1 - \delta$.

A similar argument also holds for Bob's set $B_{\neqi}$. Hence, there is a $c$-monochromatic rectangle $R = U\times V$ whose column is both $A$-good and $B$-good in their respective matrices. This is our desired rectangle $R$. 

We know: $|A_{\neqi}^{i,V}| \ge (1-3\delta) |A_{\neqi}|$ and $|B_{\neqi}^{i,V}| \ge (1-3\delta) |B_{\neqi}|$. Since $|B_{\neqi}| \geq |B|/|\cB|$,
we obtain $|B_{\neqi}^{i,V}|/|\cB|^{p-1} \ge (1-3\delta) |B_{\neqi}|/|\cB|^{p-1} \ge (1-3\delta) \beta $. 
Because $|A| / |A_{\neqi}| \le \varphi |\cA|$, we get
\[
\frac{|A_{\neqi}|}{|\cA|^{(p-1)}} \ge \frac{1}{\varphi}\cdot\frac{|A|}{|\cA|^{p}} = \frac{\alpha}{\varphi}.
\]
Combined with the lower bound on $|A_{\neqi}^{i,V}|$ we obtain $|A_{\neqi}^{i,U}|/|\cA|^{p-1} \ge (1-3\delta) \alpha / \varphi$.
The thickness of $A_{\neqi}^{i,U}$ and $B_{\neqi}^{i,V}$ follows from Lemma \ref{LEM:PROJ-THICKNESS}.
\end{proof}

\medskip

\begin{lemma}\label{LEM:REG} Let $p,h\ge 1$ be integers and $\delta,\tau \in (0,1)$ be reals, where $\tau \geq 2^{-h}$. 
Consider a function $g: \cA \times \cB \rightarrow \bool$ which has $(\delta, h)$-hitting monochromatic rectangle-distributions.
Let $A \times B \subseteq \cA^p\times\cB^p$ be a $\tau$-thick non-empty rectangle. Then for every $z \in \ZO^p$ there is some $(a, b) \in A\times B$ with $g^p(a, b) = z$.
\end{lemma}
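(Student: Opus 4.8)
The plan is to prove the statement by induction on $p$, reducing one coordinate at a time using Lemma~\ref{LEM:PROJ-THICKNESS} together with the hitting property of $\sigma_{z_i}$.

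For the base case $p=1$, we have $A\subseteq\cA$ with $|A|/|\cA|\ge\tau\ge 2^{-h}$ and $B\subseteq\cB$ with $|B|/|\cB|\ge 2^{-h}$, and $z=(z_1)$. Since $\sigma_{z_1}$ is a $(\delta,h)$-hitting rectangle-distribution over $z_1$-monochromatic rectangles and $\delta<1$, a random $R=U\times V\sim\sigma_{z_1}$ satisfies $R\cap(A\times B)\ne\varnothing$ with probability $\ge 1-\delta>0$; hence some $z_1$-monochromatic rectangle meets $A\times B$, and any point $(a,b)$ in that intersection has $g(a,b)=z_1$.

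For the inductive step, assume $p\ge 2$ and that the claim holds for $p-1$. Fix $z=(z_1,\dots,z_p)$ and set $c=z_1$. First I would observe that $|A_1|\ge\tau|\cA|\ge 2^{-h}|\cA|$: choosing any $a'\in A_{\neq 1}$ (possible since $A\ne\varnothing$), the set $\Ext^{\{1\}}_A(a')$ is contained in $A_1$ and has size at least $\dmin(A,1)\ge\tau|\cA|$ by $\tau$-thickness; symmetrically $|B_1|\ge 2^{-h}|\cB|$. Applying the hitting property of $\sigma_c$ to the rectangle $A_1\times B_1$, there is a $c$-monochromatic rectangle $U\times V\subseteq\cA\times\cB$ with $U\cap A_1\ne\varnothing$ and $V\cap B_1\ne\varnothing$. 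Then $A^{1,U}_{\neq 1}\times B^{1,V}_{\neq 1}\subseteq\cA^{p-1}\times\cB^{p-1}$ is non-empty, and by Lemma~\ref{LEM:PROJ-THICKNESS} both factors are $\tau$-thick, so the induction hypothesis applied with target string $(z_2,\dots,z_p)$ yields $(a',b')\in A^{1,U}_{\neq 1}\times B^{1,V}_{\neq 1}$ with $g^{p-1}(a',b')=(z_2,\dots,z_p)$. By definition of the restriction there are $a_1\in U$, $b_1\in V$ with $a_1\times_{\{1\}}a'\in A$ and $b_1\times_{\{1\}}b'\in B$, and $c$-monochromaticity of $U\times V$ gives $g(a_1,b_1)=z_1$; hence $a=a_1\times_{\{1\}}a'$ and $b=b_1\times_{\{1\}}b'$ satisfy $(a,b)\in A\times B$ and $g^p(a,b)=z$.

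This is essentially an unwinding of the two supporting lemmas, so I do not expect a serious obstacle; the only points that need care are the degenerate regimes $p=1$ (where no restriction is available and thickness is merely a density bound) and $p=2$ (where Lemma~\ref{LEM:PROJ-THICKNESS} is proved by a separate short argument), and making sure that the inductively produced pair genuinely reassembles into an element of $A\times B$ — via a coordinate $(a_1,b_1)\in U\times V$ whose image under $g$ equals $z_1$ — rather than only into an element of $A_{\neq 1}\times B_{\neq 1}$.
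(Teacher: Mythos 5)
Your proof is correct and follows essentially the same route as the paper's: both peel off one coordinate at a time using Lemma~\ref{LEM:PROJ-THICKNESS} together with one application of the hitting property of $\sigma_{z_i}$ per coordinate. The only cosmetic differences are that you phrase it as an induction on $p$ (the paper iterates forward, treating the last coordinate as a separate final step) and that you restrict to the full set $U$ before projecting and only later choose the specific extension $a_1$, whereas the paper fixes a single $a_i\in U_i\cap(A^{(i)})_1$ at each round and restricts to the singleton $\{a_i\}$; both variants are valid applications of the same two lemmas.
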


\begin{proof}
This follows from repeated use of Lemma \ref{LEM:PROJ-THICKNESS}. Fix arbitrary $z \in \bool^p$. Set $A^{(1)}=A$ and $B^{(1)}=B$. We proceed in rounds 
$i = 1, \ldots, p-1$ maintaining a $\tau$-thick rectangle $A^{(i)} \times B^{(i)} \subseteq \cA^{p-i+1}\times\cB^{p-i+1}$.
If we pick $U_i \times V_i$ from $\sigma_{z_i}$, then the rectangle $(A^{(i)})_{\{i\}} \cap U_i \times (B^{(i)})_{\{i\}} \cap V_i$ will be non-empty with probability $\ge 1 - \delta > 0$ (because $\sigma_{z_i}$ is a $(\delta, h)$-hitting rectangle-distribution and $\tau \geq 2^{-h}$). Fix such $U_i$ and $V_i$. Set $a_i$ to an arbitrary string in $(A^{(i)})_{\{i\}} \cap U_i$, and $b_i$ to an arbitrary string in $(B^{(i)})_{\{i\}} \cap B_i$.
Set $A^{(i+1)} = (A^{(i)})^{i, \{a_i\}}_{\neqi}$, $B^{(i+1)} = (B^{(i)})^{i, \{b_i\}}_{\neqi}$, and proceed for the next round.
By Lemma~\ref{LEM:PROJ-THICKNESS}, $A^{(i+1)} \times B^{(i+1)}$ is $\tau$-thick.

Eventually, we are left with a rectangle $A^{(p)}\times B^{(p)} \subseteq \cA \times \cB$ where both $A^{(p)}$ and $B^{(p)}$ are $\tau$-thick (and non-empty). Again with probability $1 - \delta > 0$, the $z_p$-monochromatic rectangle $U_p\times V_p$ chosen from $\sigma_{z_p}$ will intersect $A^{(p)}\times B^{(p)}$. We again set $a_p$ and $b_p$ to come from the intersection, and set $a=\langle a_1,a_2,\dots,a_p \rangle$ and $b=\langle b_1,b_2,\dots,b_p \rangle$.
\end{proof}

\subsection{Proof of the simulation theorem}
\label{SEC:G-DET}

Now we are ready to present the simulation theorem (Theorem \ref{thm:simulation}). 
Let $\eps \in (0, 1/2)$ and $\delta \in (0,1/16)$ be real numbers, and $h\ge 6/\eps$ and $1 \le p \le 2^{h(1 - \eps)}$ be integers. Let $f:\ZO^p \rightarrow \cZ$ be a function and $g: \cA \times \cB \rightarrow \ZO$ be a (possibly partial) function. Assume that $g$ has $(\delta, h)$-hitting monochromatic rectangle-distributions.
We assume we have a communication protocol $\Pi$ for solving $f \circ g^p$, and we will use $\Pi$ to construct a decision tree (procedure) for $f$.
Let $C$ be the communication cost of the protocol $\Pi$.
If $p \le 5 C / h$ the theorem is true trivially. So assume $p > 5 C / h$. 
Set $\varphi = 4 \cdot 2^{-\eps h}$ and $\tau = 2^{-h}$. The decision-tree procedure is presented in Algorithm \ref{ALG:SIMUL}. On an input $z\in \ZO^p$, it uses the protocol $\Pi$ to decide which bits of $z$ to query.

The algorithm maintains a rectangle $A\times B \subseteq \cA^p \times \cB^p$ and a set $I \subseteq [p]$ of indices. $I$ corresponds
to coordinates of the input $z$ that were not queried, yet.

\renewcommand{\algorithmicrequire}{\textbf{Input:}}
\renewcommand{\algorithmicensure}{\textbf{Output:}}

\begin{minipage}[H]{0.8\textwidth}
\begin{algorithm}[H]
\caption{Decision-tree procedure}\label{ALG:SIMUL}
\begin{algorithmic}[1]
\small
\Require{$z \in \bool^p$}
\Ensure{$f(z)$}
\State Set $v$ to be the root of the protocol tree for $\Pi$, $I = [p]$, $A = \cA^p$ and $B = \cB^p$.
\While{$v$ is not a leaf} 
	\If{$A_I$ and $B_I$ are both $\varphi$-average-thick}
		\State Let $v_0, v_1$ be the children of $v$.
		\State Choose $c \in \ZO$ for which there is $A' \times B' \subseteq (A \times B) \cap R_{v_c}$ such that
		\State \ \ (1) $|A'_I \times B'_I| \geq \frac{1}{4} |A_I \times B_I|$
		\State \ \ (2) $A'_I \times B'_I$ is $\tau$-thick.
		\State Update $A= A'$, $B= B'$ and $v = v_c$.
	\ElsIf{$\davg(A_I, j) < \varphi |\cA|$ for some $j \in [|I|]$}
		\State Query $z_i$, where $i$ is the $j$-th (smallest) element of $I$.
		\State Let $U \times V$ be a $z_i$-monochromatic rectangle of $g$ such that 
		\State \ \ (1) $A_{I\setminus\{i\}}^{i,U} \times B_{I\setminus\{i\}}^{i,V}$ is $\tau$-thick,
		\State \ \ (2) $\alpha_{I\setminus\{i\}}^{i,U} \geq  \frac{1}{\varphi} (1 - 3 \delta) \alpha$,		
		\State \ \ (3) $\beta_{I\setminus\{i\}}^{i,V} \geq (1 - 3 \delta) \beta$,
		\State Update $A = A^{i,U}, B = B^{i,V}$ and $I = I \setminus \{ i \}$.
	\ElsIf{$\davg(B_I, j) < \varphi |\cB|$ for some $j \in [|I|]$}
		\State Query $z_i$, where $i$ is the $j$-th (smallest) element of $I$.
		\State Let $U \times V$ be a $z_i$-monochromatic rectangle of $g$ such that 
		\State \ \ (1) $A_{I\setminus\{i\}}^{i,U} \times B_{I\setminus\{i\}}^{i,V}$ is $\tau$-thick,
		\State \ \ (2) $\alpha_{I\setminus\{i\}}^{i,U} \geq  (1- 3\delta) \alpha$,		
		\State \ \ (3) $\beta_{I\setminus\{i\}}^{i,V} \geq \frac{1}{\varphi} (1- 3\delta)\beta$,
		\State Update $A = A^{i,U}, B = B^{i,V}$ and $I = I \setminus \{ i \}$.
	\EndIf
\EndWhile
\State Output $f \circ g^{\,p}(A \times B)$.
\end{algorithmic}
\end{algorithm}
\end{minipage}

\paragraph*{Correctness.} 
The algorithm maintains an invariant that $A_I \times B_I$ is $\tau$-thick. This invariant is trivially true at the beginning.

If both $A_I$ and $B_I$ are $\varphi$-average-thick, the algorithm finds sets $A'$ and $B'$ on line 5--7 as follows.
Consider the case that Alice communicates at node $v$. She is sending one bit. Let $A_0$ be inputs from $A$ on which Alice sends 0 at node $v$
and $A_1 = A \setminus A_0$. We can pick $c\in\ZO$ such that $|(A_c)_I| \ge |A_I|/2$.
Set $A''=A_i$. Since $A_I$ is $\varphi$-average-thick, $A''_I$ is $\varphi/2$-average-thick. So using Lemma \ref{LEM:THICK} on $A''_I$ with $\delta$ set to $1/2$, we can
find a subset $A'$ of $A''$ such that $A'_I$ is $\frac{\varphi}{4 \cdot |I|}$-thick and $|A'_I| \geq  |A''_I|/2$.  ($A' \subseteq A''$ will be 
the pre-image of $A'_I$ obtained from the lemma.)
Since $\varphi = 4 \cdot 2^{- \eps h}$ and $|I| \le p \le 2^{h(1 - \eps)}$, the set $A'_I$ will be $2^{-h}$-thick, i.e. $\tau$-thick. Setting $B'=B$, the rectangle $A' \times B'$ satisfies properties from lines 6--7. A similar argument holds when Bob communicates at node $v$.

If $A_I$ is not $\varphi$-average-thick, the existence of $U\times V$ at line 11 is guaranteed by Lemma \ref{LEM:PROJ}. Similarly in the case
when $B_I$ is not $\varphi$-average-thick.

Next we argue that the number of queries made by Algorithm \ref{ALG:SIMUL} is at most $5C/ \eps h$. In the first part of the \textbf{while} loop (line 3--8), the density of the current $A_I\times B_I$ drops by a factor $4$ in each iteration. There are at most $C$ such iterations, hence this density can drop by a factor of at most $4^{-C} = 2^{-2C}$. For each query that the algorithm makes, the density of the current $A_I \times B_I$ increases by a factor of at least $(1-3\delta)/\varphi \ge \frac{1}{2 \varphi} \ge 2^{\eps h - 3}$ (here we use the fact that $\delta \le 1/6$). Since the density can be at most one, the number of queries is upper bounded by
\[
  \frac{2 C}{\eps h - 3} \le \frac{4 C}{\eps h}, \tag*{when $h \ge 6 / \eps$.}
\]

Finally, we argue that $f(A \times B)$ at the termination of Algorithm \ref{ALG:SIMUL} is the correct output. Given an input $z \in \bool^p$, whenever the algorithm queries any $z_i$, the algorithm makes sure that all the input pairs $(x,y)$ in the rectangle $A \times B$ are such that $g(x_i, y_i) = z_i$ --- because $U\times V$ is always a $z_i$-monochromatic rectangle of $g$. At the termination of the algorithm, $I$ is the set of $i$ such that $z_i$ was not queried by the algorithm.  As $p > 4C/\eps h$, $I$ is non-empty. Since $A_I\times B_I$ is $\tau$-thick, it follows from Lemma \ref{LEM:REG} that $A\times B$ contains some input pair $(x,y)$ such that $g^{|I|}(x_I, y_I) = z_I$, and so $g^p(x, y) = z$. Since $\Pi$ is correct, it must follow that $f(z) = f \circ g^{\,p}(A\times B)$. This concludes the proof of correctness. \qed

\bsni
With greater care the same argument will allow for $\delta$ to be close to $\frac{1}{2}$. This would require also tightening the $1 - 3 \delta$ factors appearing in Lemma \ref{LEM:PROJ} to something close to $1 - 2 \delta$. The details are left to the reader, should he be interested.

\section{Hitting rectangle-distributions for \texorpdfstring{$\GH$}{GH}}
    \label{SEC:RGH}

We construct a hitting rectangle distribution for $\GH_{n, \frac{1}{4}}$. Subsequently, we will show a $(\delta, h)$-hitting rectangle distribution where  $\frac{|A \times B|}{|\zon \times \zon|} \geq 2^{-h}$.

Recall that $d_H(x,y)$ denotes the Hamming distance between the strings $x$ and $y$. Let $B_r(x)$ be the Hamming ball of radius $r$ around $x$, i.e. $B_r(x) = \{ y \in \ZO^n \mid d_H(x, y) \le r \}$; for a set $A \subset \ZO^n$, $B_r(A) = \cup_{a \in A} B_r(a)$. 

Let $\eps = \frac{1}{8}$ and $\cH$ be the set of all strings in $\zon$ with Hamming weight $n/2$. Now consider the rectangle distributions $\sigma_0$ and $\sigma_1$ obtained from the following sampling procedure:

\begin{itemize}
\item Choose a random string $x \in \cH$, and let $\bar x \in \cH$ be its bit-wise complement.

\item Now let $U_x = B_{\eps n}(x)$ and $V_x = B_{\eps n}(\bar x)$.

\item The output of $\sigma_1$ is the rectangle $U_x \times V_x$, and the output of $\sigma_0$ is $U_x \times U_x$.
\end{itemize}

\bsni
For the chosen value of $\eps$, $U_x \times V_x$ is a $1$-monochromatic rectangle, since for any $u \in U_x, v \in V_x$,
\[
 d_H( u, v ) \ge n - 2 \eps n \ge \frac 3 4 n.
\]
On the other hand, $U_x \times U_x$ is $0$-monochromatic, since for any $u, u' \in U_x$,
\[
d_H( u, u' ) \le 2 \eps n \le \frac 1 4 n.
\]
Both inequalities are obtained by a straight-forward application of triangle inequality.

\begin{lemma}\label{lem:rect-dist-gh}
The distributions $\sigma_0$ and $\sigma_1$ are $(2^{-\frac{n}{100}}, \frac{n}{100})$-hitting monochromatic rectangle distributions for $\GH_{n, \frac{1}{4}}$.
\end{lemma}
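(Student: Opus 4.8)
The plan is to prove the hitting property for $\sigma_0$ and $\sigma_1$ essentially in parallel, since the only difference between them is that one uses $U_x \times U_x$ and the other $U_x \times V_x$, and $V_x = U_{\bar x}$ is again a ball of radius $\eps n$ around a uniformly random point of $\cH$ (because $x \mapsto \bar x$ is a bijection of $\cH$). So it suffices to show: for a fixed rectangle $A \times B \subseteq \zon \times \zon$ with $|A|, |B| \ge 2^{-n/100}\, 2^n$, a uniformly random $x \in \cH$ has $B_{\eps n}(x) \cap A \ne \varnothing$ and (independently-chosen, or the same) $B_{\eps n}(\bar x) \cap B \ne \varnothing$, each with probability $\ge 1 - 2^{-n/100}$; then a union bound over the two events gives the failure probability $\le 2 \cdot 2^{-n/100} \le 2^{-n/100 + 1}$, which we absorb into the stated bound (or re-do the arithmetic with a slightly larger constant in the exponent, e.g. working with $h = n/100$ and noting the slack in $\eps = 1/8$ vs. the radius actually needed).

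\textbf{Key steps.} First I would reduce ``$B_{\eps n}(x) \cap A = \varnothing$'' to ``$x \notin B_{\eps n}(A)$'': the ball of radius $\eps n$ around $x$ misses $A$ exactly when $x$ is at Hamming distance more than $\eps n$ from every point of $A$, i.e.\ $x \notin B_{\eps n}(A)$. Hence the failure probability for the $A$-side is $\Pr_{x \in \cH}[x \notin B_{\eps n}(A)] = |\cH \setminus B_{\eps n}(A)| / |\cH|$. Second, I would lower-bound $|B_{\eps n}(A)|$ using the vertex-isoperimetric inequality on the hypercube (Harper's theorem): among all sets of a given size, Hamming balls minimize the size of their $r$-neighborhood, so $|B_{\eps n}(A)| \ge |B_{\eps n}(\text{Hamming ball of size } |A|)|$. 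Since $|A| \ge 2^{n - n/100}$, the minimizing Hamming ball already has radius close to $n/2$ (about $n/2 - \Theta(\sqrt{n})$ by a standard binomial tail estimate / Stirling), so its $\eps n = n/8$-neighborhood is a ball of radius well above $n/2 + n/16$, which contains all but a $2^{-\Omega(n)}$ fraction of $\zon$ — in particular all but a $2^{-\Omega(n)}$ fraction of $\cH$. I would quote the quantitative form: the number of points outside a Hamming ball of radius $n/2 + t$ is at most $2^n e^{-2t^2/n}$, and with $t = \Theta(n)$ this is $2^{n} \cdot 2^{-\Omega(n)}$; one then checks the constants make $|\cH \setminus B_{\eps n}(A)|/|\cH| \le 2^{-n/100}$, using $|\cH| = \binom{n}{n/2} \ge 2^n / (n+1)$ to convert the fraction-of-$\zon$ bound into a fraction-of-$\cH$ bound at the cost of a factor $n+1$, which is swallowed by the exponential. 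The $B$-side with $\bar x$ is identical since $\bar x$ is also uniform on $\cH$. Finally I would assemble: union bound over the $A$-side and $B$-side failure events, conclude $\Pr[(U \times V) \cap (A \times B) = \varnothing] \le 2^{-n/100}$ after fixing constants, hence $\sigma_0, \sigma_1$ are $(2^{-n/100}, n/100)$-hitting, and each is monochromatic of the correct color by the triangle-inequality computation already given before the lemma.

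\textbf{Main obstacle.} The crux is the isoperimetric/counting step: getting Harper's theorem to yield a clean bound of the form ``a set of density $\ge 2^{-n/100}$ has $\eps n$-neighborhood of density $\ge 1 - 2^{-n/100}$'' with the specific constants $\eps = 1/8$ and $h = n/100$. Concretely, one must verify that a Hamming ball of size $2^{n - n/100}$ has radius $\ge n/2 - c\sqrt{n}$ for a suitable $c$ (a binomial-tail / Chernoff estimate: $\Pr[\mathrm{Bin}(n, 1/2) \le n/2 - c\sqrt n] \le e^{-2c^2}$, so $\log_2$ of the complementary count is $n - O(1)$ once $c$ is constant, which is comfortably $\ge n - n/100$), and then that adding $n/8$ to the radius pushes us to $n/2 + n/8 - c\sqrt n$, whose complement has size $\le 2^n \cdot 2^{-\Omega(n)}$ with the $\Omega(n)$ constant beating $1/100$ (it will, since $n/8$ dominates $c\sqrt n$ and $2(n/8)^2/n / \ln 2 = n/(32 \ln 2) \gg n/100$). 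This is all elementary but the bookkeeping of constants is where care is needed; there is a large slack, so I would not optimize, merely exhibit that $\eps = 1/8$ and $h = n/100$ leave room to spare. Everything else — the reduction to $x \notin B_{\eps n}(A)$, the bijection $x \mapsto \bar x$, the union bound, and the monochromaticity — is routine.
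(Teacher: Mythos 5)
Your approach is essentially the paper's own: reduce ``$U_x$ misses $A$'' to ``$x \notin B_{\eps n}(A)$'', lower-bound $|B_{\eps n}(A)|$ via Harper's theorem by replacing $A$ with a Hamming ball of the same size, and estimate the complement. You are in fact somewhat more careful than the paper's write-up on two points it elides: (a) $x$ is sampled from $\cH$, not from $\zon$, so a bound on $|\zon\setminus B_{\eps n}(A)|/2^n$ costs a factor $2^n/\binom{n}{n/2}=O(\sqrt n)$ when converted into a probability over $\cH$ (the paper's displayed inequality $\Pr_x[x\notin B_{\eps n}(A)]\le (2^n-|B_{\eps n}(A)|)/2^n$ is, strictly speaking, not justified as written); and (b) one needs a union bound over the two marginal failure events, which the paper dispatches with the terse remark that $U_x$ and $V_x$ have the same marginal distribution. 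Both costs are polynomial and are absorbed by the exponential slack, as you say.

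There is, however, a real numerical error in your ``Main obstacle'' paragraph that should be corrected. You claim the minimizing Hamming ball of size $2^{n-n/100}$ has radius about $n/2-\Theta(\sqrt n)$, and on that basis apply the Chernoff estimate with $t\approx n/8$, concluding a margin of $n/(32\ln 2)\approx n/22$ in the exponent. This has the direction backwards: a ball of radius $n/2-c\sqrt n$ for constant $c$ contains a $\Phi(-2c)=\Theta(1)$ fraction of $\zon$, i.e.\ $\Theta(2^n)$ points, far more than $2^{n-n/100}$; a ball of size $2^{n-n/100}$ therefore has radius $n/2-\Theta(n)$. Concretely it is $\gamma n$ with $H(\gamma)\approx 0.99$, i.e.\ $\gamma\approx 0.443$; the paper uses $\gamma = 7/16 = 0.4375$ with $H(7/16)\approx 0.989$. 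Adding $\eps n = n/8$ then yields a ball of radius $\approx n/2 + n/16$, not $n/2+n/8$, and the Chernoff bound must be applied with $t\approx n/16$, giving $2^{-n/(128\ln 2)}\approx 2^{-n/89}$. This does beat $2^{-n/100}$, but only by a margin of roughly $n/800$ in the exponent, matching the paper's entropy computation $1-H(7/16)\approx 0.011 > 0.01$. So the conclusion stands and the route is the same, but the ``room to spare'' you describe is illusory: with $\eps = 1/8$ and $h=n/100$, the constants are tight to within a factor of about $0.9$, and the $\Theta(\sqrt n)$ radius claim would have to be corrected to $\Theta(n)$ before the bookkeeping can be trusted.
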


\bsni
To prove Lemma \ref{lem:rect-dist-gh}, we need the following theorem due to Harper. We will call $S \subset \zon$ a \textit{Hamming ball with center} $c \in \zon$ if $B_r(c) \subseteq S \subset B_{r+1}(c)$ for some non-negative integer $r$. For sets $S, T \subset \zon$, we define the \textit{distance} between $S$ and $T$ as $d(S,T) = \min\{d_H(s,t) \mid s \in S, t\in T\}$.

\begin{theorem}[Harper's theorem, \cite{FF81, H66}]
    \label{thm:harper}
Given any non-empty subsets $S$ and $T$ of $\zon$, there exist a Hamming ball $S_0$ with center $\bar{1}$ and Hamming ball $T_0$ with center $\bar{0}$ such that $|S| = |S_0|, |T|= |T_0|$ and $d(S_0, T_0)\geq d(S,T)$.
\end{theorem}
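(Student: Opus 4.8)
The plan is to derive Harper's theorem, in the stated two-set form, from the \emph{vertex-isoperimetric inequality} for the hypercube — that Hamming balls around a fixed point simultaneously minimise the size of every $r$-neighbourhood — and then to prove that inequality by the compression (combinatorial shifting) method.

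First I would set up the reduction. Write $\rho = d(S,T)$; we may assume $\rho \ge 1$, since for $\rho = 0$ any pair of balls works. Put $r = \rho - 1$, so that $S \cap B_r(T) = \varnothing$ and therefore $|B_r(T)| \le 2^n - |S|$. Let $T_0$ be the Hamming ball with centre $\bar{0}$ and $|T_0| = |T|$ (one exists for every cardinality). Granting the vertex-isoperimetric inequality, $|B_r(T_0)| \le |B_r(T)| \le 2^n - |S|$. With the standard within-shell order, the neighbourhood $B_r(T_0)$ of a ball around $\bar{0}$ is again a ball around $\bar{0}$, so its complement $W = \overline{B_r(T_0)}$ is a Hamming ball with centre $\bar{1}$, and $|W| \ge |S|$. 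Since balls around $\bar{1}$ are nested by size, I can pick $S_0 \subseteq W$ to be the ball around $\bar{1}$ with $|S_0| = |S|$. Then $S_0 \cap B_r(T_0) = \varnothing$, i.e.\ $d(S_0, T_0) \ge r + 1 = \rho = d(S,T)$, which is exactly the claim.

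Next I would prove the vertex-isoperimetric inequality: for every $A \subseteq \ZO^n$ and every $r \ge 0$, $|B_r(A)| \ge |B_r(A_0)|$, where $A_0$ is the ball around $\bar{0}$ with $|A_0| = |A|$. For a coordinate $i$, define the down-compression $C_i(A)$ by partitioning $\ZO^n$ into the $2^{n-1}$ pairs of strings differing only in coordinate $i$ and, for each pair meeting $A$ in exactly one point, replacing that point by the one with a $0$ in coordinate $i$ (pairs meeting $A$ in both points or neither are left alone). Then $|C_i(A)| = |A|$, and the crucial local fact is $B_1(C_i(A)) \subseteq C_i(B_1(A))$, which I would verify by a short case analysis on a single such pair. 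Iterating on $r$ gives $B_r(C_i(A)) \subseteq C_i(B_r(A))$ and hence $|B_r(C_i(A))| \le |B_r(A)|$. Applying the $C_i$ in rounds strictly decreases the potential $\sum_{a \in A}|a|$ until $A$ is stable under every $C_i$ — that is, a down-set — all the while leaving $|A|$ unchanged and never increasing any $|B_r(A)|$.

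The hard part is the last step, since a down-set need not be a Hamming ball — indeed a down-set can have strictly larger neighbourhoods than the ball of the same size — so the one-dimensional compressions alone do not finish the proof. I would close this gap by one of the two standard routes: either (a) induct on $n$, splitting $\ZO^n$ along the last coordinate as $A = A' \sqcup A''$, replacing $A'$ and $A''$ by their isoperimetric optima via the inductive hypothesis, and checking a finite case analysis (organised by the relative sizes of $A'$ and $A''$) that $|B_1(A') \cup A''| + |B_1(A'') \cup A'|$ does not increase; or (b) introduce a second family of two-dimensional compressions $C_{i\to j}$, show these also preserve $|A|$ and do not increase any $|B_r(A)|$, and verify that the sets stable under all $C_i$ and all $C_{i\to j}$ are exactly the initial segments of the simplicial order, i.e.\ the Hamming balls around $\bar{0}$. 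Either way the argument is routine in outline but delicate in the bookkeeping, and this is where the real care goes. Combining this with the two preceding steps yields the theorem.
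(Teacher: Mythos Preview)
The paper does not prove Harper's theorem at all: it is quoted as a classical result with citations to \cite{FF81, H66}, and only the corollary about Hamming balls minimising $|B_r(\cdot)|$ is derived from it. So there is no ``paper's own proof'' to compare against.

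Your proposal is a correct and standard route to the result. The reduction of the two-set statement to the vertex-isoperimetric inequality is clean and exactly right (and is essentially the argument the paper runs in the paragraph before Corollary~\ref{cor:harper}, just in the reverse direction). For the isoperimetric inequality itself, you correctly identify that one-dimensional down-compressions alone only reach down-sets, not Hamming balls, and that a further device (induction on $n$ with a case split, or a richer family of compressions driving the set to an initial segment of the simplicial order) is needed. Both options you name are legitimate and well-documented; the second is closer in spirit to the Frankl--F\"uredi reference the paper cites. One small caution on your option~(b): the two-coordinate shifts $C_{i\to j}$ do \emph{not} in general satisfy $B_1(C_{i\to j}(A))\subseteq C_{i\to j}(B_1(A))$, so the monotonicity of $|B_r(\cdot)|$ under them is not a one-line check as it is for $C_i$; the usual fix is to prove the $r=1$ isoperimetric statement by induction and then extend to general $r$ by iterating, rather than pushing the compressions through $B_r$ directly.
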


Note that Claim \ref{thm:harper} also tells us when $B_{r}(S)$ is smallest for a set $S \subset \zon$. This can be argued in the following way: Given a set $S \in \zon$, let $T_S = \zon \setminus B_r(S)$. It is immediate that $d(S,T_S) = r+1$. Now let us suppose that $S$ is such that it achieves the smallest $B_r(S')$ among all $S' \in \zon$ with $|S'| = |S|$. This also means that $T_S$ is the biggest such set. Using Harper's theorem, we can find set $S_0$ and $T_0$ such that $d(S_0,T_0) \geq r+1$ where $S_0$ is centered around $\bar{1}$ and $T_0$ is centered around  $\bar{0}$ with $|S_0| = |S|$ and $|T_0| = |T_S|$. Now it is easy to see that $T_0 \subseteq \zon \setminus B_{r}(S_0)$, i.e., $|T_S| = |T_0| \leq |T_{S_0}|$, which is a contradiction. This means that $|B_r(S)|$ will be the smallest if $S$ is a Hamming ball centered around $\bar{1}$. This gives us the following corollary.

\begin{corollary}
    \label{cor:harper}
    For any non-negative integer $r \in [n]$ and among the set $\mathsf{A} = \{A \subset \zon \mid |A| = k\}$ for any $k$, if $A$ is a Hamming ball centered around either $\bar{1}$ or $\bar{0}$, then $|B_r(A)| \leq |B_r(A')|$ for any $A' \in \mathsf{A}$.
\end{corollary}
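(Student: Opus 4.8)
The plan is to deduce Corollary~\ref{cor:harper} from Harper's theorem (Theorem~\ref{thm:harper}) by essentially formalizing the paragraph of reasoning that precedes the corollary statement, which identifies Hamming balls around $\bar 1$ (or, by symmetry, $\bar 0$) as the sets that minimize the size of their $r$-neighborhood among all sets of a fixed cardinality. The key observation is the duality between ``$B_r(A)$ is small'' and ``the complement of $B_r(A)$ is large'', together with the fact that $d(A, \zon\setminus B_r(A)) \ge r+1$.

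\begin{proof}
Fix $k$ and $r\in[n]$, and let $A\in\mathsf A$ be a Hamming ball centered around $\bar 1$; the case of center $\bar 0$ is symmetric (apply the bit-wise complement map, which is an isometry of $\zon$ and sends Hamming balls around $\bar 0$ to Hamming balls around $\bar 1$ while preserving all cardinalities and the operator $B_r(\cdot)$). Suppose, for contradiction, that there is some $A'\in\mathsf A$ with $|B_r(A')| < |B_r(A)|$. Write $T = \zon\setminus B_r(A)$ and $T' = \zon\setminus B_r(A')$; then $|T'| > |T|$, and by construction $d(A,T)\ge r+1$ and $d(A',T')\ge r+1$ (any point outside $B_r(A')$ is at distance strictly more than $r$ from $A'$, and $T'$ is non-empty since $|T'|>|T|\ge 0$).

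Now apply Theorem~\ref{thm:harper} to the pair $S=A'$ and $T=T'$: there is a Hamming ball $S_0$ centered at $\bar 1$ with $|S_0|=|A'|=k$ and a Hamming ball $T_0$ centered at $\bar 0$ with $|T_0|=|T'|$, such that $d(S_0,T_0)\ge d(A',T')\ge r+1$. Since $|S_0|=k=|A|$ and both $S_0$ and $A$ are Hamming balls centered at $\bar 1$, we have $S_0 = A$ (a Hamming ball around a fixed center is determined up to its cardinality only up to which points of the ``boundary sphere'' it includes, but the $r$-neighborhood $B_r(\cdot)$ depends only on the cardinality; more carefully, one checks $B_r(S_0) = B_r(A)$, which is all we need). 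From $d(S_0,T_0)\ge r+1$ we get $T_0 \cap B_r(S_0) = \varnothing$, i.e. $T_0 \subseteq \zon\setminus B_r(S_0) = \zon\setminus B_r(A) = T$. Hence $|T'| = |T_0| \le |T|$, contradicting $|T'|>|T|$.

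Therefore no such $A'$ exists, and $|B_r(A)|\le|B_r(A')|$ for every $A'\in\mathsf A$.
\end{proof}

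The main subtlety to be careful about is the passage ``$S_0=A$'': Harper's theorem produces \emph{a} Hamming ball $S_0$ around $\bar 1$ of the right cardinality, and one must argue that its $r$-neighborhood coincides with (or is contained in) $B_r(A)$. This is immediate once one notes that for Hamming balls centered at the same point, $B_r(\cdot)$ is monotone in cardinality and that $|S_0|=|A|$ forces $B_r(S_0)=B_r(A)$ — the boundary-sphere points omitted or included in a non-full Hamming ball of a given size all lie at the same distance from the center, so they contribute the same neighborhood. If one prefers to avoid even this mild point, one can instead run Harper's theorem with $S = S_0$ already taken to be \emph{the} canonical Hamming ball of size $k$ around $\bar 1$ from the start, noting that the proof of Theorem~\ref{thm:harper} gives $S_0$ of that form. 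Everything else is a routine unwinding of the complementation duality, so I do not expect any serious obstacle.
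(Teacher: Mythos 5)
Your approach is the same as the paper's (Harper's theorem plus the complementation duality), and you correctly flag the step ``$S_0 = A$'' as the crux. But the gap is real, and the parenthetical argument you offer for dismissing it is wrong: two Hamming balls, in the paper's sense, of the same cardinality around the same center can have $r$-neighborhoods of different sizes. Take $n = 4$, $r = 1$, $A_1 = B_1(\bar{1}) \cup \{1100, 0011\}$ and $A_2 = B_1(\bar{1}) \cup \{1100, 1010\}$. Both are Hamming balls of size $7$ centered at $\bar{1}$, but $|B_1(A_1)| = 15$ (only $0000$ is missing) while $|B_1(A_2)| = 14$ ($0000$ and $0001$ are missing). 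The point is that the two boundary-sphere strings added in $A_1$ have disjoint supports while those added in $A_2$ have intersecting supports, and no coordinate permutation maps one configuration to the other. So neither $B_r(S_0) = B_r(A)$ nor even $|B_r(S_0)| = |B_r(A)|$ is forced, and this step in your proof does not go through.

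The deeper issue is that Corollary~\ref{cor:harper} as stated asserts more than Harper's theorem gives. Theorem~\ref{thm:harper} produces \emph{some} Hamming ball $S_0$ of size $k$ satisfying $|B_r(S_0)| \le |B_r(A')|$, not that \emph{every} Hamming ball of size $k$ is a minimizer, and the example above shows the latter is false. What your argument (and the paper's informal paragraph preceding the corollary) actually establishes is the existential statement: for each $k$, the minimum of $|B_r(\cdot)|$ over size-$k$ sets is attained at some Hamming ball centered at $\bar{1}$ (equivalently, by the bit-flip isometry you already mention, at $\bar{0}$). That weaker statement is all the proof of Lemma~\ref{lem:rect-dist-gh} uses, since there the corollary is invoked only for the specific full ball $B_{\gamma n}(\bar{0})$, which carries no boundary-sphere ambiguity. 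So the right repair is to weaken the claim to full balls of the form $B_{r'}(c)$ (or to phrase the corollary existentially), not to try to argue $B_r(S_0) = B_r(A)$.
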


\bsni
Now we state the proof of Lemma \ref{lem:rect-dist-gh}.
\begin{proof}[Proof of Lemma \ref{lem:rect-dist-gh}]
We will show that any set $A \subset \zon$ of size $|A| \ge 2^{\frac{99}{100} n}$ will be hit by $U_x$ with probability $\ge 1 - 2^{-\frac{n}{100}}$. The lemma now follows since $U_x$ and $V_x$ have the same marginal distribution.

\bsni
The event $U_x \cap A = \emptyset$ happens exactly when $x \notin B_{\eps n}(A)$:
\[
\Pr_x[U_x \cap A = \emptyset] = \Pr_x[x \notin B_{\eps n}(A)] \le \frac{2^n - |B_{\eps n}(A)|}{2^n}.
\]
From Corollary \ref{cor:harper} we know that $|B_{\eps n}(A)|$ is smallest when $A$ is itself a Hamming ball around $0$ of the same density as $A$. I.e., if $|B_{\gamma n}(0)| \le |A|$, then
\[
|B_{\eps n}(A)| \ge |B_{\eps n}( B_{\gamma n}(0) )| = |B_{(\gamma + \eps)n}(0)|.
\]
For $\gamma = \frac{1}{2} - \frac{\eps}{2} = \frac{1}{2} - \frac{1}{16}$, and since $H(\gamma) < \frac{99}{100}$, we have
\[
|B_{\gamma n}(0)| \le 2^{H(\gamma) n} \le 2^{\frac{99}{100} n} \le |A|.
\]
And so $|B_{\eps n}(A)| \ge |B_{(\gamma + \eps)n}(0)| = |B_{\frac{n}{2} + \frac{n}{16}}(0)| \ge 2^n - |B_{\frac{n}{2} - \frac{n}{16}}(1)| \ge 2^n - 2^{\frac{99}{100} n}$. It now follows
\[
\Pr_x[U_x \cap A = \emptyset] \le \frac{2^{\frac{99}{100} n}}{2^n} \le 2^{-\frac{n}{100}}. \qedhere
\]
\end{proof}

\section{Hitting rectangle-distributions for \texorpdfstring{$\IP$}{IP}} \label{SEC:IP-DET}

In this section, we will show that $\IP_n$ has $( 4 \cdot 2^{-n/20}, n/5)$-hitting monochromatic rectangle-distributions. This will show a deterministic simulation result when the inner function is $\inprod_n$, i.e.,
\begin{align*}
\cD^{cc}(f \circ \inprod_n^p) \geq \cD^{dt}(f) \cdot \Omega(n).
\end{align*}

\bsni
We will use the following well-known variant of Chebyshev's inequality:

\begin{proposition}[Second moment method]\label{exact-second-moment} Suppose that $X_i \in [0,1]$ and $X = \sum_i X_i$ are random variables. Suppose also that for all $i$ and $j$, $X_i$ and $X_j$ are \emph{anti-correlated}, in the sense that
  \[
  \E[X_i X_j] \le \E[X_i] \cdot \E[X_j].
  \]
  Then $X$ is well-concentrated around its mean, namely, for every $\eps$:
  \begin{equation*}
    \Pr[ X \in \mu (1 \pm \eps) ] \ge 1 - \frac{1}{\eps^2 \mu}.
  \end{equation*}
\end{proposition}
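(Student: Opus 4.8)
The plan is the textbook second-moment argument: bound the variance of $X$ using the two hypotheses, and then apply Chebyshev's inequality. Write $\mu = \E[X] = \sum_i \E[X_i]$, and expand
\[
\E[(X-\mu)^2] \;=\; \sum_i \big(\E[X_i^2] - \E[X_i]^2\big) \;+\; \sum_{i \ne j} \big(\E[X_iX_j] - \E[X_i]\E[X_j]\big).
\]
First I would use the anti-correlation hypothesis (applied to distinct $i,j$) to note that every summand of the second sum is at most $0$, so the entire cross-term sum may be discarded, leaving $\E[(X-\mu)^2] \le \sum_i \E[X_i^2]$. Next I would use that each $X_i$ takes values in $[0,1]$, so $X_i^2 \le X_i$ pointwise, hence $\E[X_i^2] \le \E[X_i]$; summing over $i$ gives the clean variance bound $\E[(X-\mu)^2] \le \sum_i \E[X_i] = \mu$.

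With this bound, Chebyshev's inequality finishes the proof: for any $\eps > 0$,
\[
\Pr\big[\,|X-\mu| \ge \eps\mu\,\big] \;\le\; \frac{\E[(X-\mu)^2]}{\eps^2\mu^2} \;\le\; \frac{\mu}{\eps^2\mu^2} \;=\; \frac{1}{\eps^2\mu},
\]
and taking complements yields $\Pr[X \in \mu(1\pm\eps)] \ge 1 - 1/(\eps^2\mu)$, which is exactly the claimed inequality. (The cases $\mu = 0$ or $\eps \ge 1$ make the statement vacuous or trivial, so one may assume $0 < \eps < 1$ and $\mu > 0$, and the two displays above are all that is needed.)

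I do not expect a genuine obstacle here; the only mildly non-obvious point is that the $[0,1]$ range is what lets one replace $\sum_i \E[X_i^2]$ — which in general is not comparable to $\mu^2$ — by the sum of first moments $\sum_i \E[X_i] = \mu$, thereby upgrading a generic Chebyshev bound of the shape $\operatorname{Var}(X)/(\eps\mu)^2$ into the sharper $1/(\eps^2\mu)$. The anti-correlation assumption is invoked exactly once, solely to annihilate the covariance terms (pairwise independence would do just as well); in the intended application to a random $n/2$-dimensional subspace hitting a fixed large set, one checks this anti-correlation of the relevant indicator sums directly.
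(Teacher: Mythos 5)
Your proof is correct, and it is the standard textbook derivation of this variant of Chebyshev's inequality. The paper itself states Proposition~\ref{exact-second-moment} without proof, merely calling it a ``well-known variant of Chebyshev's inequality,'' so there is no alternative argument to compare against; your write-up supplies exactly the expected reasoning (drop the covariance terms by anti-correlation, bound $\E[X_i^2]\le\E[X_i]$ using $X_i\in[0,1]$, then apply Chebyshev), and your reading of ``for all $i$ and $j$'' as effectively $i\ne j$ is the sensible one, since taking $i=j$ literally would force each $X_i$ to be constant.
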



\bsni
All of the rectangle-distributions rely on the following fundamental anti-correlation property:

\begin{lemma}[Hitting probabilities of random subspaces]
  \label{LEM:rand-subspace} Let $0 \le d \le n$ be natural numbers. Fix any $v \not= w$ in $\bbF_2^n$, and pick a random subspace $V$ of dimension $d$. Then the probability that $v \in V$ is exactly
  \[
  p_v =
  \begin{cases}
    \frac{2^{d}-1}{2^n-1} & \tif v \not= 0\\
    1 & \tif v = 0.
  \end{cases}
  \]
  And the probability that both $v, w \in V$ is exactly
  \[
  p_{v,w} =
  \begin{cases}
    \binom{2^{d}-1}{2} \bigm/ \binom{2^n-1}{2} & \tif v, w \not= 0\\
    p_v & \tif w = 0,\text{ and}\\
    p_w & \tif v = 0.
  \end{cases}
  \]
  Hence it always holds that $p_{v, w} \le p_v p_w$.
\end{lemma}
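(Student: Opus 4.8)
The plan is to prove the three assertions in turn, using the transitive action of $GL_n(\bbF_2)$ on short tuples of linearly independent vectors together with a double-counting (linearity-of-expectation) argument, rather than enumerating subspaces directly.

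First I would dispose of the single-vector probability. If $v=0$ then $0\in V$ for every subspace and $p_v=1$. If $v\neq 0$, I would invoke the fact that $GL_n(\bbF_2)$ acts transitively on the $2^n-1$ nonzero vectors and maps $d$-dimensional subspaces to $d$-dimensional subspaces, so $\Pr[v\in V]$ is the same value $p$ for every nonzero $v$. Summing over all nonzero vectors and exchanging sum and expectation gives
\[
(2^n-1)\,p \;=\; \sum_{v\neq 0}\Pr[v\in V] \;=\; \E\big[\,\#\{\text{nonzero vectors of }V\}\,\big] \;=\; 2^d-1 ,
\]
because every $d$-dimensional $V$ has exactly $2^d-1$ nonzero vectors; hence $p_v=\frac{2^d-1}{2^n-1}$, and one checks that the extremes $d=0,n$ are consistent with this.

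Next I would handle the pair. If $v\neq w$ and, say, $w=0$, then the event $\{v\in V\wedge w\in V\}$ equals $\{v\in V\}$, so $p_{v,w}=p_v$ (symmetrically if $v=0$). If $v,w$ are both nonzero and distinct, the key observation is that over $\bbF_2$ they are automatically linearly independent --- the span $\{0,v,w,v+w\}$ collapses only if $v=w$ --- so $GL_n(\bbF_2)$ is transitive on such ordered pairs and $\Pr[v\in V\wedge w\in V]$ equals a common value $q$. Double-counting unordered pairs of distinct nonzero vectors then yields
\[
\binom{2^n-1}{2}\,q \;=\; \sum_{\{v,w\}}\Pr[v\in V\wedge w\in V] \;=\; \E\Big[\binom{2^d-1}{2}\Big] \;=\; \binom{2^d-1}{2},
\]
giving the claimed formula for $p_{v,w}$. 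For the anti-correlation bound $p_{v,w}\le p_vp_w$: it holds with equality whenever $v$ or $w$ is $0$, and trivially (both sides zero) when $d=0$; for $d\ge 1$ and $v,w$ nonzero and distinct I would cancel the factor $\frac{2^d-1}{2^n-1}$ common to both sides, reducing to $\frac{2^d-2}{2^n-2}\le\frac{2^d-1}{2^n-1}$, and clearing denominators turns this into $(2^d-2)(2^n-1)\le(2^d-1)(2^n-2)$, which expands and simplifies to $2^d\le 2^n$, true since $d\le n$.

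I do not expect a real obstacle here: the whole argument is elementary. The only points needing care are the degenerate cases ($d=0$, or $v$ or $w$ equal to $0$) and the remark that distinct nonzero vectors over $\bbF_2$ are automatically independent, which is precisely what makes the transitivity argument for pairs go through. If one prefers to avoid the symmetry argument, an equally routine alternative is to count the $d$-dimensional subspaces containing a fixed line (resp.\ plane) via Gaussian binomial coefficients and verify the resulting ratios algebraically.
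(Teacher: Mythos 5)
Your proof is correct and follows essentially the same route as the paper: symmetry under the $GL_n(\bbF_2)$-action (the paper phrases this via random full-rank matrices with $V=\ker M$, which is the same thing) plus linearity of expectation to pin down $p_v$ and $p_{v,w}$, then an algebraic comparison for the anti-correlation. If anything you are slightly more careful than the paper on the degenerate cases $d\in\{0,n\}$ and $v$ or $w$ zero, where the paper's final ratio $\frac{2^d-2}{2^d-1}\cdot\frac{2^n-1}{2^n-2}$ would be undefined or equal to $1$ rather than strictly less.
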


\proof The case when $v$ or $w$ are $0$ is trivial. The value $p_v = \Pr[v \in V]$ for a random subspace $V$ of dimension $d$ equals $\Pr[M v = 0]$ for a random non-singular $(n - d) \times n$ matrix $M$, letting $V = \ker M$. For any $v\not=0, v'\not=0$, $M$ will have the same distribution as $M N$, where $N$ is some fixed linear bijection of $F_2^n$ mapping $v$ to $v'$; it then follows that $p_v = p_{v'}$ always. But then
\[
\sum_{v\not=0} p_v = \E\left[ \sum_{v\not=0} [v \in V]\right] = 2^{d}-1,
\]
and since all $p_v$'s are equal, then $p_v = \frac{2^{d}-1}{2^n-1}$.

Now let $p_{v,w} = \Pr[v \in V, w \in V]$. In the same way we can show that $p_{v, w} = p_{v', w'}$ for all two such pairs, since a linear bijection will exist mapping $v$ to $v'$ and $w$ to $w'$ (because every $v\neq w$ is linearly independent in $\bbF_2^n$). And now
\[
\sum_{v, w\not=0} p_{v, w} = \E\left[ \sum_{v, w\not=0} [v \in V] [w\in V]\right] = \binom{2^{d}-1}{2}.
\]
The value of $p_{v, w}$ is then as claimed. We conclude by estimating
\[
\frac{p_{v, w}}{p_v p_w} = \frac{\binom{2^{d}-1}{2}}{\binom{2^n-1}{2}} \cdot \frac{1}{p_v p_w} = \frac{2^{d}-2}{2^{d}-1} \cdot \frac{2^n-1}{2^n-2} < 1. \qed
\]

\bsni
It can now be shown that a random subspace of high dimension will hit a large set w.h.p.:

\begin{lemma}\label{LEM:subspace-fools-subset-1}
   Let $\eps < \frac{1}{2}$ be a positive real number, and consider a set $B \subseteq \bool^n$ of density $\beta = \frac{|B|}{2^n} \geq 2^{-(\frac{1}{2} - \eps)n}$. Pick $V$ to be a random linear subspace of $\bool^n$ of dimension $d$, where $d \geq (\frac{1}{2} - \frac \eps 4) n + 6$. Then
  \begin{align*}
    \Pr_V\left[\frac{|B \cap V|}{|V|} \in (1\pm 2^{-\frac\eps 4 n}) \cdot \beta\, \right] \geq 1- \frac{1}{4} \cdot 2^{-\frac\eps 4 n}.
  \end{align*}
\end{lemma}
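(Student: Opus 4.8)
The plan is to apply the second moment method of Proposition~\ref{exact-second-moment} to the random variable that counts the nonzero elements of $B$ landing inside $V$. Write $B_0 = B\setminus\{0\}$, set $m = |B_0|$, and for each $b\in B_0$ let $X_b$ be the indicator of the event $b\in V$; put $X = \sum_{b\in B_0} X_b = |B_0\cap V|$. Since $0\in V$ always, $|B\cap V|$ equals either $X$ or $X+1$, and the missing unit will be harmless. By Lemma~\ref{LEM:rand-subspace} each $X_b$ has mean $p := \frac{2^d-1}{2^n-1}$, and for distinct $b,b'\in B_0$ we have $\E[X_b X_{b'}] = p_{b,b'} \le p_b p_{b'} = p^2$, so the $X_b$ are pairwise anti-correlated. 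Hence Proposition~\ref{exact-second-moment} gives, for $\mu := \E[X] = mp$ and any $\eps'>0$,
\[
\Pr_V\!\left[\,X\notin \mu\,(1\pm\eps')\,\right] \;\le\; \frac{1}{\eps'^2\,\mu}.
\]
The diagonal contributions $\E[X_b^2]=\E[X_b]$ to the variance are absorbed by the hypothesis $X_b\in[0,1]$, which is exactly why the bound reads $1/(\eps'^2\mu)$.

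Next I would lower-bound $\mu$. Using $m \ge |B|-1 \ge \beta 2^n - 1 \ge \tfrac12\beta 2^n$ (valid because $\beta \ge 2^{-(\frac12-\eps)n}$) together with $p \ge 2^{d-1}/2^n$, one obtains $\mu \ge \beta\,2^{d-2}$; substituting $\beta \ge 2^{-(\frac12-\eps)n}$ and $d \ge (\tfrac12-\tfrac\eps4)n+6$ gives $\mu \ge 16\cdot 2^{3\eps n/4}$. Choosing $\eps' = \tfrac12\cdot 2^{-\eps n/4}$ then makes $\eps'^2\mu \ge \tfrac14 2^{-\eps n/2}\cdot 16\cdot 2^{3\eps n/4} = 4\cdot 2^{\eps n/4}$, so the failure probability is at most $\tfrac14\cdot 2^{-\eps n/4}$, which matches the target.

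It then remains to convert concentration of $X$ into concentration of $|B\cap V|/|V| = |B\cap V|/2^d$. On the good event one has $|B\cap V|\in[\mu(1-\eps'),\,\mu(1+\eps')+1]$; and since $\mu/2^d = \tfrac{m}{2^n}\cdot\rho$ with $\rho := \tfrac{2^n(2^d-1)}{2^d(2^n-1)}\in[1-2^{-d+1},1]$ and $\tfrac{m}{2^n}\in[\beta-2^{-n},\beta]$, the value $\mu/2^d$ sits within a relative error $2^{-d+1}+2^{-n}/\beta$ of $\beta$. Dividing the displayed bound by $2^d$ and collecting the error contributions — namely $\eps'=\tfrac12 2^{-\eps n/4}$, the terms $2^{-d+1}$, $2^{-n}/\beta$ and $2^{-d}$ (each at most $2^{-3\eps n/4}$ by the hypotheses on $d$ and $\beta$), and the additive $1/\mu$ — one checks that the total relative error stays below $2^{-\eps n/4}$, precisely because $\eps'$ was deliberately set to half the target radius. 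Thus $\tfrac{|B\cap V|}{2^d}\in(1\pm 2^{-\eps n/4})\beta$ on the good event, and the lemma follows.

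There is no conceptual difficulty here: the whole argument is driven by the pairwise anti-correlation estimate $p_{v,w}\le p_v p_w$ already established in Lemma~\ref{LEM:rand-subspace}. The one thing to be careful about — and the only place where the specific slack in the hypotheses (the ``$+6$'' in the dimension, and the $\tfrac14$ in the conclusion) is used — is the final bookkeeping of the several small error terms; this is routine provided one keeps $\eps'$ a constant factor below the target radius $2^{-\eps n/4}$ and $n$ is large enough that $2^{-3\eps n/4}$ is dominated by $2^{-\eps n/4}$.
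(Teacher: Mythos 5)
Your proposal is correct and uses essentially the same approach as the paper's own proof: the second moment method (Proposition~\ref{exact-second-moment}) applied to the indicators $[b\in V]$, with the anti-correlation supplied by Lemma~\ref{LEM:rand-subspace}. The only cosmetic difference is that you strip $0$ out of $B$ at the outset and re-insert the $+1$ at the end, whereas the paper keeps $0$ in $B$ and does a two-case computation of $\mu$, and you pick $\eps'=\tfrac12\cdot 2^{-\eps n/4}$ where the paper uses $\tfrac13\cdot 2^{-\eps n/4}$; the bookkeeping closes in both versions for $n$ sufficiently large.
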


\begin{proof}
  Let $b_1, \ldots, b_N$ be the elements of $B$, and define the random variables $X_i = [b_i \in V]$ and $X = |B \cap V| = \sum_i X_i$. The $\E[X_i]$ were computed in the proof of Lemma \ref{LEM:rand-subspace}, which gives us
\[
  \mu = \E[X] = \sum_i \E[X_i] = \begin{cases}\beta 2^n \frac{2^d-1}{2^n-1} \quad \text{if $\bar{0} \notin B$}\\
  \beta 2^n \frac{2^d-1}{2^n-1} + (1- \frac{2^d-1}{2^n-1}) \quad \text{otherwise.}
  \end{cases}
\]
Let's look at the case where $\bar{0} \not\in B$. We can estimate $\mu$ as follows:\footnote{Throughout the proof we will use the fact that $(1 \pm \delta)^2 \subseteq 1 \pm 3\cdot \delta$, and also that $1 \pm \delta \subseteq 1 \pm \delta'$ whenever $\delta \le \delta'$.}
\[
  \mu \;=\; \left(1+\frac{1}{2^n-1}\right)(1- 2^{-d})\beta |V| \;\in\; (1 \pm 2^{-(\frac{1}{2} - \frac \eps 2) n})^2 \beta |V| \;\subseteq\; \left(1 \pm \frac{1}{3} \cdot 2^{- \frac{\eps}{2} n}\right) \beta |V|.
\]
When $\bar 0 \in B$ we still have $\mu \in (1 \pm 2^{-\frac \eps 2 n})\beta |V|$ , because
$1- \frac{2^d-1}{2^n-1} \le 1 \ll \frac{1}{3} \cdot 2^{- \frac \eps 2 n}  \beta |V|$. So this holds in both cases.

\medskip

Lemma \ref{LEM:rand-subspace} also says that $\E[X_i X_j] \le \E[X_i] \E[X_j]$ for all $i \not= j$. And so by the second moment method (Lemma \ref{exact-second-moment}):
\begin{align*}
\Pr\left[ X \in \mu \left(1 \pm \delta\right) \right] &\ge 1 - \frac{1}{\delta^2 \mu} \\
\intertext{which means,}
\Pr\left[ X \in (1 \pm 2^{-\frac \eps 2 n}) (1\pm \delta) \beta |V| \right] & \ge 1 - \frac{1}{\delta^2 \cdot \beta \cdot 2^d \cdot (1- 2^{-\frac \eps 2 n})} \\
\intertext{Taking $\delta = \frac{1}{3} 2^{-\frac \eps 4 n}$, we get,}
\Pr\left[ X \in (1 \pm 2^{-\frac \eps 4 n}) \beta |V| \right] & \ge 1 - \frac{9}{2^{-\frac{\eps}{2} n} \cdot 2^{-(\frac{1}{2} - \eps) n} \cdot 64 \cdot 2^{(\frac{1}{2} - \frac\eps 4) n} } \ge 1 - \frac{1}{4} \cdot 2^{-\frac\eps 4 n}. \qedhere
\end{align*}
\end{proof}

\bsni
 We will show a similar result when we pick the set $V$ in the following manner: First we pick a uniformly random odd-Hamming weight vector $a \in \bool^n$, and then we pick $W$ to be a random subspace of dimension $d$ within $a^\perp$, where $d \geq (\frac{1}{2} - \frac{\eps}{4}) n + 6$; then $V = a + W$.
 
\begin{lemma}\label{LEM:subspace-fools-subset-1a}
   Consider a set $B \subseteq \bool^n$ of density $\beta = \frac{|B|}{2^n} \geq 2^{-(\frac{1}{2} - \eps)n}$. Pick $V$ as described above. Then
  \begin{align*}
    \Pr_V\left[\frac{|B \cap V|}{|V|} \in \beta(1\pm 2^{-\frac \eps 4 n})\right] \geq 1 - 2^{-\frac \eps 4 n}.
  \end{align*}
\end{lemma}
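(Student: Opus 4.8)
The plan is to imitate the proof of Lemma~\ref{LEM:subspace-fools-subset-1}: compute the first and second moments of $X = |B \cap V|$ over the random choice of $V = a + W$, and then apply a second-moment concentration bound. The fundamental computation is, for a fixed $v \in \bbF_2^n$, the hitting probability $p_v = \Pr_{a,W}[v \in a+W]$, and, for fixed distinct $v,w$, the probability $p_{v,w} = \Pr_{a,W}[v,w \in a+W]$. The key structural fact is that $v \in a+W$ iff $v+a \in W$, and since $W \subseteq a^\perp$ this forces $\langle v+a,a\rangle = 0$, i.e. $\langle v,a\rangle = \langle a,a\rangle = 1$ (using that $a$ has odd weight). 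So one first conditions on the linear event $\langle v,a\rangle = 1$, which has probability $\tfrac12$ for every $v \notin \{\bar 0,\bar 1\}$, probability $0$ for $v = \bar 0$ — so, conveniently, $\bar 0$ is never hit and there is no $\bar 0 \in B$ edge case — and probability $1$ for $v = \bar 1$; conditioned on it, the vector $v+a$ lies in the $(n-1)$-dimensional space $a^\perp$, so Lemma~\ref{LEM:rand-subspace} applied inside $a^\perp$ gives $v+a \in W$ with probability $\tfrac{2^d-1}{2^{n-1}-1}$, except in the single case $v=a$ where $v+a = 0 \in W$ automatically. Summing these contributions yields $p_v = (1 \pm 2^{-\Omega(\eps n)})\,2^{d-n}$ for all $v \notin \{\bar 0,\bar 1\}$, hence $\mu = \E[X] = \sum_{b \in B} p_b = (1 \pm 2^{-\Omega(\eps n)})\,\beta|V|$, with the possible $\bar 1 \in B$ term being a dominated correction exactly as $\bar 0 \in B$ was in Lemma~\ref{LEM:subspace-fools-subset-1}.

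For the second moment, set $X_i = [b_i \in V]$ for an enumeration $b_1,\dots,b_N$ of $B$, so $X = \sum_i X_i$, and compute $p_{v,w}$ the same way: condition on $\langle v,a\rangle = \langle w,a\rangle = 1$, then apply Lemma~\ref{LEM:rand-subspace} to the (generically distinct and nonzero) pair $v+a,\,w+a \in a^\perp$. Unlike in Lemma~\ref{LEM:subspace-fools-subset-1}, one does \emph{not} get exact anti-correlation $p_{v,w} \le p_v p_w$: because $0$ lies in every subspace $W$, a pair $v,w$ of odd weight with $\langle v,w\rangle = 1$ receives a small positive bonus. A short estimate shows this bonus is multiplicatively tiny, $p_{v,w} \le (1 + O(2^{-d}))\,p_v p_w$ for all distinct $v,w$. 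Feeding this into the variance bound $\mathrm{Var}(X) = \sum_i \E[X_i^2] + \sum_{i\ne j}\E[X_i X_j] - \mu^2 \le \mu + O(2^{-d})\mu^2$ and applying Chebyshev with deviation $\delta = \tfrac13 2^{-\eps n/4}$, the $\tfrac{1}{\delta^2\mu}$ term is bounded exactly as in Lemma~\ref{LEM:subspace-fools-subset-1} using $\beta \ge 2^{-(\frac12-\eps)n}$ and $d \ge (\frac12-\frac\eps4)n + 6$ (it is the ``$+6$'', i.e. a factor $2^6$, that makes the constant come out right), while the extra $\tfrac{O(2^{-d})}{\delta^2}$ term is $O(2^{\eps n/2 - d})$, which is even smaller since $d \ge (\frac12-\frac\eps4)n$ and $\eps < \tfrac12$. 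Adding the two contributions gives failure probability $\le 2^{-\eps n/4}$, which is the claimed bound. Alternatively, and equivalently, one avoids reasoning about $p_{v,w}$ at all: conditioning on $a$, $|B \cap V| = |(B+a)\cap W| = |B^a \cap W|$ where $B^a = (B+a)\cap a^\perp$ has size $|\{b \in B : \langle b,a\rangle = 1\}|$; a clean second-moment argument over $a$ — whose indicators \emph{are} genuinely anti-correlated — shows $|B^a| \approx \beta\cdot 2^{n-1}$ for all but a $2^{-\Omega(n)}$ fraction of $a$, and then Lemma~\ref{LEM:subspace-fools-subset-1} applied within $a^\perp \cong \bbF_2^{n-1}$ finishes the job.

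The main obstacle, then, is precisely this loss of exact anti-correlation caused by the affine shift. One must check that the positive bonus — which, spelled out, arises from the coincidences $a = v$ or $a = w$ and from the fact that $\bar 1$ satisfies every odd-weight linear constraint (the complementary case $v+w = \bar 1$ only helps, since it forces $p_{v,w} = 0$ as $a$ has odd weight) — is genuinely lower-order, and one must either weaken the second-moment statement of Proposition~\ref{exact-second-moment} to tolerate a $(1+o(1))$ factor, or route around it via the dimension-$(n-1)$ reduction sketched above. Everything else is a transcription of the proof of Lemma~\ref{LEM:subspace-fools-subset-1}, with $n$ replaced by $n-1$ in the random-subspace estimates and with slightly larger constants in the error terms to absorb the factor-of-two losses coming from the restriction to $a^\perp$ and from the $\bar 1$ correction.
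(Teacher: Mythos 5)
Your primary route — a direct second-moment computation of $|B\cap(a+W)|$ over the joint randomness of $(a,W)$ — is genuinely different from the paper's and, as far as I can tell, workable. You have correctly identified the crux: the affine shift destroys the exact anti-correlation $p_{v,w}\le p_vp_w$ that Lemma~\ref{LEM:rand-subspace} gave in the purely linear case, with positive excess arising from the coincidences $a=v$ or $a=w$ (which, as you note, only contribute when $\langle v,w\rangle=1$) and from $\bar 1\in B$. Your claim that this excess is $O(2^{-d})$ multiplicatively checks out, and feeding the resulting $\mathrm{Var}(X)\le\mu+O(2^{-d})\mu^2$ into Chebyshev does give the extra term $O(2^{\eps n/2-d})$, which is indeed dominated because $d\ge(\tfrac12-\tfrac\eps4)n\gg\tfrac{3\eps}{4}n$ when $\eps<\tfrac12$. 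The price is a case analysis over $\{\bar 0,\bar 1, a=v, a=w, v+w=\bar1\}$, and you would need to actually state and prove the weakened form of Proposition~\ref{exact-second-moment}. The paper instead takes exactly your ``alternative'' route, but in a cleaner form than your sketch: it conditions on $a$, defines $B'=(B-a)\cap a^\perp$, and calls $a$ ``good'' when $\beta'=|B'|/|a^\perp|\in\beta(1\pm2^{-\eps n/4})$; for good $a$ it applies Lemma~\ref{LEM:subspace-fools-subset-1} verbatim inside $a^\perp\cong\bbF_2^{n-1}$, and then it remains only to bound $\Pr_a[a\text{ not good}]$. Rather than running a fresh second-moment argument over $a$ as you suggest, the paper observes that for a uniformly random \emph{nonzero} $a$, the hyperplane $a^\perp$ is itself a uniformly random $(n-1)$-dimensional subspace, so Lemma~\ref{LEM:subspace-fools-subset-1} already controls $|B\cap a^\perp|$; then it reflects — since $|a^\perp|=|a^\||=2^{n-1}$, concentration of $|B\cap a^\perp|$ is literally equivalent to concentration of $|B\cap a^\||$ — and finally pays a factor of $2$ to pass from uniformly random nonzero $a$ to uniformly random odd-weight $a$, at which point $B\cap a^\|=(B-a)\cap a^\perp=B'$. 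This re-uses the machinery of Lemma~\ref{LEM:subspace-fools-subset-1} twice and avoids the anti-correlation bookkeeping entirely; your primary route buys nothing over it except avoiding the reflection observation, at the cost of the case analysis and a modified concentration lemma.
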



\begin{proof}
Let $B' = (B - a) \cap a^\perp$ and let $\beta' = \frac{|B'|}{|a^\perp|}$. A string $a \in \ZO^n$ is called \emph{good} when
\[
  \beta' \eqdef \frac{|(B - a) \cap  a^\bot|}{| a^\bot|} \in \beta \cdot (1 \pm 2^{-\frac \eps 4 n}).
\]
We will later show that if $a$ is a uniformly random string of odd Hamming weight, then
\begin{align}\label{EQ:eq1}
\Pr_a\left[ a \text{ is good} \right] \ge 1 - \frac 2 4 \cdot 2^{-\frac \eps 4 n}.\tag{$\ast$}
\end{align}
For every good $a$, Lemma \ref{LEM:subspace-fools-subset-1} gives us:
\begin{align*}
\Pr_W\left[\frac{|B' \cap W|}{|W|} \in \beta'(1\pm  2^{-\frac \eps 4 n}) \;\middle|\; a\right] \geq 1- \frac 1 4 \cdot 2^{-\frac \eps 4 n}.
\end{align*}
Our result then follows by Bayes' rule.

\medskip

To prove (\ref{EQ:eq1}), suppose that $a$ is chosen to be a uniformly random non-zero string (i.e. with either even or odd Hamming weight). Then $a^\bot$ is a uniformly random subspace of dimension $n - 1 \gg (\frac{1}{2} - \frac\eps4)n + 6$. Hence by Lemma \ref{LEM:subspace-fools-subset-1},
\[\label{eq:prob-a-perp}
  \Pr_a\left[\frac{|B \cap  a^\bot|}{|a^\bot|} \in \beta \cdot (1 \pm 2^{-\frac \eps 4 n})\right] \ge 1 - \frac 1 4 \cdot 2^{-\frac \eps 4 n}.\tag{$\ast\ast$}
\]
Now $|a^\perp| = 2^{n-1}$, so if $a^\|$ denotes the complement of $a^\perp$ (in $\ZO^n$), then $|a^\|| = 2^{n-1}$ also, and
\[
  \frac{|B \cap  a^\bot|}{|a^\bot|} \in \beta \cdot (1 \pm 2^{-\frac \eps 4 n}) \iff |B \cap  a^\bot| \in \frac{1}{2} |B| \cdot (1 \pm 2^{-\frac \eps 4 n})  \iff \frac{|B \cap  a^\||}{|a^\||} \in \beta \cdot (1 \pm 2^{-\frac \eps 4 n}).
\]
So (\ref{eq:prob-a-perp}) also holds with respect to the rightmost (equivalent) event. Since a uniformly random non-zero $a$ has odd Hamming weight with probability $> \frac{1}{2}$, it must then follow that if we pick a uniformly random $a$ with odd Hamming weight, then: 
\[
  \Pr_a\left[\frac{|B \cap  a^\||}{|a^\||} \in \beta \cdot (1 \pm 2^{-n/20})\right] \ge 1 - \frac 2 4 \cdot 2^{-\frac \eps 4 n}.
\]
Now notice that $|a^\|| = |a^\bot|$ and that for odd Hamming weight $a$, $B \cap a^\| = (B - a) \cap a^\bot$; this establishes (\ref{EQ:eq1}).
\end{proof}

\bsni
The lemmas above are the key to constructing rectangle-distributions for $\inprod$.

\begin{lemma}
	\label{LEM:IP-HITTING} For all $0 < \eps < 1/2$ and every sufficiently large $n$,
	$\inprod_n$ has $(2 \cdot 2^{-\frac \eps 4 n}, (\frac{1}{2} - \eps) n)$-hitting monochromatic rectangle-distributions.
\end{lemma}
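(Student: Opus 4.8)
The plan is to exhibit, for each $c\in\{0,1\}$, a distribution $\sigma_c$ over $c$-monochromatic rectangles of $\inprod_n$ and verify the hitting property against any rectangle $X\times Y$ of density $\geq 2^{-(\frac12-\eps)n}$ in each coordinate. For $\sigma_0$, I would sample a uniformly random subspace $V\subseteq\bbF_2^n$ of dimension $d=\lceil(\frac12-\frac\eps4)n\rceil+6$ and output the rectangle $V^\perp\times V$; since $\inprod_n(u,v)=0$ for all $u\in V^\perp$, $v\in V$, this is indeed $0$-monochromatic. For $\sigma_1$, following the sketch in the introduction, I would first pick a uniformly random $a\in\bbF_2^n$ of odd Hamming weight, then pick $W$ a uniformly random $d$-dimensional subspace of $a^\perp$, set $V=a+W$ and $U=a+W^\perp$ (the orthogonal complement taken appropriately so that $\inprod_n(u,v)=1$ for all $u\in U$, $v\in V$), and output $U\times V$; I would double-check the monochromaticity computation $\inprod_n(a+w', a+w) = \inprod(a,a)+\inprod(a,w)+\inprod(w',a)+\inprod(w',w) = 1+0+0+0 = 1$ using that $a$ has odd weight, $w\in a^\perp$, and $w'\in W^\perp$ contains $a^\perp$... actually I need $w'\perp w$ and $w'\perp a$, so $U$ should be $a+(W^\perp\cap\text{appropriate space})$; I will pin down the exact space so that the product is genuinely $1$-monochromatic.

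The hitting analysis for $\sigma_0$ is immediate from Lemma~\ref{LEM:subspace-fools-subset-1}: a random $d$-dimensional subspace $V$ satisfies $\frac{|Y\cap V|}{|V|}\in(1\pm 2^{-\frac\eps4 n})\beta_Y$ with probability $\geq 1-\frac14 2^{-\frac\eps4 n}$, hence in particular $Y\cap V\neq\varnothing$ (using $\beta_Y\geq 2^{-(\frac12-\eps)n}>0$); and $V^\perp$ has dimension $n-d\geq(\frac12-\frac\eps4)n+6$ as well since $\eps<1/2$ forces $d\leq n$ and $n-d\geq\frac n2-\frac\eps4 n\ge d$ roughly — I should check the arithmetic so that $V^\perp$ also qualifies for Lemma~\ref{LEM:subspace-fools-subset-1}, giving $X\cap V^\perp\neq\varnothing$ with probability $\geq 1-\frac14 2^{-\frac\eps4 n}$. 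A union bound gives $\Pr[(V^\perp\times V)\cap(X\times Y)\neq\varnothing]\geq 1-\frac12 2^{-\frac\eps4 n}\geq 1-2\cdot2^{-\frac\eps4 n}$, which is the claimed $(\delta,h)$ with $h=(\frac12-\eps)n$. For $\sigma_1$, I would instead invoke Lemma~\ref{LEM:subspace-fools-subset-1a}: applied to $Y$ (a coset hitting statement) it gives $V=a+W$ intersects $Y$ with probability $\geq 1-2^{-\frac\eps4 n}$, and symmetrically $U=a+W^\perp$-type set intersects $X$ with probability $\geq 1-2^{-\frac\eps4 n}$ — though here I need to make sure $U$ is distributed so that the lemma (or a minor variant of it) applies, possibly requiring a second application of the $a^\perp$ argument. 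A union bound then yields failure probability $\leq 2\cdot 2^{-\frac\eps4 n}$, matching the statement.

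The main obstacle is the $\sigma_1$ construction: I must choose $U$ and $V$ so that simultaneously (i) $U\times V$ is exactly $1$-monochromatic, (ii) both $U$ and $V$ are cosets whose marginal distributions are covered by Lemma~\ref{LEM:subspace-fools-subset-1a} (or its transpose), and (iii) the dimension counting works out so both components are large enough to apply the lemma. The cleanest route is: with $a$ of odd weight fixed, work inside $a^\perp$ (dimension $n-1$), pick $W\subseteq a^\perp$ random of dimension $d$, and let $W'$ be the orthogonal complement of $W$ \emph{inside} $a^\perp$, which has dimension $n-1-d$; then set $V=a+W$, $U=a+W'$. For $u=a+w'$, $v=a+w$ with $w\in W$, $w'\in W'\subseteq a^\perp$ and $w'\perp w$, we get $\inprod(u,v)=\inprod(a,a)=1$. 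I then need $n-1-d\geq(\frac12-\frac\eps4)n+6$, i.e. roughly $d\leq\frac n2-6$, which holds for the chosen $d$ when $\eps$ is a constant and $n$ large. Establishing that the marginal of $U=a+W'$ satisfies a Lemma~\ref{LEM:subspace-fools-subset-1a}-type bound (since $W'$ is a random complement rather than a uniformly random subspace of $a^\perp$ of its dimension — but in fact the uniform distribution on $d$-dimensional subspaces induces the uniform distribution on $(n-1-d)$-dimensional complements, so this is fine) completes the argument. I would end by plugging $\delta=2\cdot2^{-\frac\eps4 n}=o(1)$ and $h=(\frac12-\eps)n$ into the definition to conclude.

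\begin{proof}[Proof sketch]
The detailed verification follows the plan above; see the constructions of $\sigma_0$ and $\sigma_1$ and apply Lemmas~\ref{LEM:subspace-fools-subset-1} and~\ref{LEM:subspace-fools-subset-1a} together with a union bound.
\end{proof}
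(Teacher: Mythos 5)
Your construction is essentially identical to the paper's: $\sigma_0$ takes a random subspace $V$ and pairs it with $V^\perp$, and $\sigma_1$ picks an odd-weight $a$, a random $W\subseteq a^\perp$, its complement $W'$ inside $a^\perp$, and outputs $(a+W')\times(a+W)$; both are then hit via Lemmas~\ref{LEM:subspace-fools-subset-1} and \ref{LEM:subspace-fools-subset-1a} and a union bound. The only cosmetic deviation is that you fix $d=\lceil(\tfrac12-\tfrac\eps4)n\rceil+6$ rather than the paper's $n/2$ (or $(n-1)/2$), which still keeps both complementary dimensions in range for large $n$, and you spell out the monochromaticity and the fact that $W\mapsto W'$ pushes the uniform distribution to the uniform distribution — details the paper leaves implicit.
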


\begin{proof}
We define the distributions $\sigma_0$ and $\sigma_1$ by the following sampling methods:
\begin{description}
\item[Sampling from $\sigma_0$:] We choose a uniformly-random $\frac{n}{2}$-dimensional subspaces $V$ of $\bbF_2^n$, and let $V^\bot$ be its orthogonal complement; output $V\times V^\bot$.

\item[Sampling from $\sigma_1$:] First we pick $a \in \bool^n$ uniformly at random conditioned on the fact that $a$ has odd Hamming weight; then we pick random subspace $W$ of dimension $(n-1)/2$ from $a^\bot$, and let $W^\bot$ be the orthogonal complement of $W$ \emph{inside} $a^\bot$. We output $V\times V^\|$, where $V = a + W$ and $V^\parallel = a + W^\bot$.
\end{description}

\noindent
The rectangles produced above are monochromatic as required. Also, $V$ and $V^\bot$ of $\sigma_0$ are both random subspaces of dimension $\ge (\frac{1}{2} - \frac\eps4) n + 6$ --- as required by Lemma \ref{LEM:subspace-fools-subset-1} --- and $V$ and $V^\|$ of $\sigma_1$ are both obtained by the the kind of procedure required in Lemma \ref{LEM:subspace-fools-subset-1a}. It then follows by a union bound that if $R$ is chosen by either $\sigma_0$ or $\sigma_1$ that, if $A, B$ are subsets of $\ZO^n$ of densities $\alpha, \beta \ge 2^{-(\frac{1}{2}-\eps)n}$, then
\[
  \Pr_{R}\left[ \frac{|A\times B \cap R|}{|R|} = (1 \pm 9 \cdot 2^{-\frac \eps 4 n}) \cdot \alpha \beta \right] \ge 1 - 2\cdot 2^{-\frac \eps 4 n}.
\]
Hence the same probability lower-bounds the event that $A\times B \cap R \neq \varnothing$.
\end{proof}

\section*{Acknowledgement}

Part of the research for this work was done at the Institut Henri Poincar\'e, as part of the workshop \emph{Nexus of Information and Computation Theories}.

The research leading to these results has received funding from the European Research Council under the European Union's Seventh Framework Programme (FP/2007-2013)/ERC Grant Agreement n. 616787. The first author is partially supported by a Ramanujan Fellowship of the DST, India and the last author is partially supported by a TCS fellowship.

The research leading to these results has received funding from the Foundation for Science and Tecnology (FCT), Portugal, grant number SFRH/BPD/116010/2016.

\clearpage
\stepcounter{section}
\addcontentsline{toc}{section}{\thesection. References}
\bibliography{fork}

\end{document}